\documentclass[preprint,12pt]{elsarticle}
\usepackage{etoolbox}
\makeatletter
\def\ps@pprintTitle{%
	\let\@oddhead\@empty
	\let\@evenhead\@empty
	\def\@oddfoot{\reset@font\hfil\thepage\hfil}
	\let\@evenfoot\@oddfoot
}
\makeatother
\usepackage{graphics}
\usepackage[labelfont=bf]{caption}
\usepackage[table,xcdraw]{xcolor}
\usepackage{amsmath, amsthm, amssymb}
\usepackage{natbib}
\usepackage[colorlinks=true]{hyperref}
\usepackage{lscape}
\usepackage[utf8]{inputenc} 
\usepackage{amsthm}

\usepackage{enumerate}
\usepackage{mathrsfs}
\usepackage{setspace}
\usepackage{multirow}
\usepackage{graphicx}
\usepackage{amsfonts}
\usepackage{verbatim}
\usepackage{amssymb}
\usepackage{amsthm}
\usepackage{diagbox}
\usepackage{booktabs}
\usepackage{array}
\usepackage{multirow,bigstrut,bigdelim}
\newtheorem{thm}{Theorem}[section]

\theoremstyle{plain}
\newtheorem{defi}{Definition}[section]
\newtheorem{ex}{Example}[section]

\theoremstyle{remark}

\numberwithin{equation}{section}
\usepackage[mathscr]{euscript}
\usepackage{geometry} 
\usepackage{graphicx,lscape} 
\usepackage{booktabs}
\usepackage{array}
\usepackage{paralist} 
\usepackage{verbatim}
\usepackage{subfig} 
\usepackage{orcidlink}
\biboptions{comma,round,authoryear}
   
\begin{document}
\begin{frontmatter}
		\title{\textbf{On Some Multivariate Extensions to Zenga Curve: Properties and Applications}}		
    \author{Shifna, P. R \orcidlink{0009-0002-4240-6941}\corref{cor1}}
	\ead{shifnarazack92@gmail.com}
	\author{S. M. Sunoj \orcidlink{0000-0002-6227-1506}}
	\ead{smsunoj@cusat.ac.in}
	
	
	\cortext[cor1]{Corresponding author}
		
\address{Department of Statistics\\Cochin University of Science and Technology\\ Cochin 682 022, Keralam, INDIA}

\begin{abstract}
Measures of inequality are often limited in their ability to capture multidimensional aspects that arise from the joint distribution of multiple socio-economic variables. In this paper, we develop bivariate extensions of the Zenga inequality measure using bivariate quantile functions. We propose new bivariate Zenga surfaces and study their theoretical properties. A vector-valued bivariate Zenga curve is also introduced to provide a more detailed characterization of inequality. A non-parametric estimator is proposed and methods are evaluated through simulation studies and applied to the analysis of digital inequality across countries using indicators such as broadband penetration and digital literacy. The results highlight the effectiveness of the proposed framework in capturing multidimensional inequality.
\end{abstract}
\begin{keyword} Zenga curve \sep moment distribution \sep multivariate quantile function \sep bivariate distributions \sep nonparametric estimation. 
		\MSC[2020] 62H99
\end{keyword}	
	\end{frontmatter}
\section{Introduction}
  The measurement of economic inequality has remained a central theme in economics and statistics for over a century, beginning with the pioneering contributions of \citet{lorenz1905methods} and \citet{gini1912variabilita}. The Lorenz curve and the associated Gini index have since become the most widely used tools for analyzing the concentration of a single variable such as income. Despite their popularity, these classical measures are inherently aggregate in nature and often fail to capture structural differences across distributions, particularly when inequality varies across different segments of the population.
\par In response to these, \cite{zenga1984proposta} proposed a novel inequality curve that offers a distinct alternative perspective to classical measures such as the Gini index. The fundamental idea is to evaluate inequality locally by comparing, at each population proportion $u \in (0,1)$, the position of the corresponding quantile in the original distribution with that in an income-weighted distribution. Let $X>0$ be an absolutely continuous random variable with cumulative distribution function $F(x)$ and finite mean $\mu$. The first moment distribution is defined as
\[
F^{(1)}(x) = \frac{1}{\mu}\int_0^x {t dF(t)}\]
with density
\[
f^{(1)}(x) = \frac{x f(x)}{\mu}.
\]
This distribution assigns weights proportional to income and represents the cumulative share of total income received by individuals with income less than or equal to $x$.
Let $Q(u) = F^{-1}(u)$ and $Q^{(1)}(u) = {F^{(1)}}^{-1}(u)$ denote the corresponding quantile functions. Then the Zenga curve is defined as
\begin{equation}\label{1.1}
  Z(u) = 1 - \frac{Q(u)}{Q^{(1)}(u)}, \quad 0<u<1.  
\end{equation}
$Z(u)$ captures the relative disparity between the original distribution and its income-weighted counterpart at each quantile level. The curve is bounded such that
$0 \leq Z(u) < 1$ for all $u\in (0,1)$.
It achieves the value of 0 only in the case of perfect equality, $i.e.$, a degenerate distribution where every individual has the same income. Graphically, a Zenga curve that is high for low values of $u$ indicates significant inequality among the poor, while a curve that increases towards $u=1$ suggests the gap between the rich and poor widens across the distribution. A high, flat curve indicates persistent inequality throughout.
The Zenga index is obtained by integrating the zenga curve,
\[
\xi = \int_0^1 {Z(u) du},
\]
providing a normalized and scale-invariant measure of inequality.  Unlike the Gini index, which is more sensitive to changes in the middle, the Zenga index offers greater sensitivity to disparities at the extremes, particularly the lower tail and yields a constant curve for the lognormal distribution, making it both a synthetic and analytical measure.
\par Due to limitations in \citet{zenga1984proposta} definition like, making computation and interpretation less straightforward, \citet{zenga2007inequality} introduced a new inequality curve, denoted by $I(p)$ based on a simpler and more intuitive foundation. For each population proportion $p$, the curve compares the average income of the bottom $100 \times p\%$ (lower mean $M^{-}(p)$) of the population  with that of the top $100 \times (1-p)\%$ (upper mean $M^{+}(p)$), defined as 
\begin{equation}\label{1.2}
    I(p) =  1 - \frac{M^{-}(p)}{M^{+}(p)}.
\end{equation}
Unlike \eqref{1.1}, the modified Zenga curve \eqref{1.2} uniquely determines the underlying distribution, exhibits no forced behavior (unlike the Lorenz and Bonferroni curves), and yields a synthetic index 
$I = \int_{0}^{1} I(p)\,dp$, the total area beneath the curve over the unit interval that exceeds the Gini index in sensitivity, particularly at the extremes of the distribution.
\par The theoretical properties of the Zenga curve and index have been extensively studied in the literature. Various works have examined their fundamental properties, consistency with Lorenz ordering, relationship with the transfer principle, and connections with stochastic orders. In addition, several studies have focused on the estimation and inference of the Zenga index, including asymptotic results and empirical applications, as well as comparisons with classical measures such as the Gini index.  For example, see \citet{zenga1990concentration}, \citet{berti1995note}, \citet{pollastri1987},  \citet{polisicchio1993a,polisicchio1993b},  \citet{greselin2010zenga}, \citet{nair2012some}, \citet{langel2012inference}.
\par Despite these advances, the existing literature is largely confined to the univariate setting, where inequality is measured with respect to a single attribute, predominantly the income. However, socio-economic well being is inherently multidimensional, determined by the joint distribution of variables such as income, education, health, and access to infrastructure. In recent years, a large volume of empirical research has focused on analyzing such multivariate data across diverse fields, including economic development, finance, and information technology. In particular, several studies, such as \citet{chinn2007determinants}, \citet{billon2009disparities}, \citet{cruz2012digital}, and \citet{thoene2025broadband}, highlight the growing importance of digital inclusion, emphasizing that factors such as broadband penetration and digital literacy jointly determine access to opportunities in the digital economy.
\par The primary objective of the present study is to address this gap by proposing bivariate extensions of the Zenga curve using the bivariate quantile function introduced by \citet{vineshkumar2019bivariate, unnikrishnan2023properties} and subsequently extended to the multivariate case by \citet{pr2024multivariate}. We establish the theoretical properties of the proposed measures and develop nonparametric estimators for their practical implementation. The applicability of the bivariate Zenga curve is further illustrated through an empirical analysis of digital inequality, using data on broadband penetration and digital literacy across countries.
\par The paper is organized into six sections. Section 2 introduces the first moment distribution and derives the bivariate Zenga surface based on it, along with its properties. In Section 3, a new bivariate Zenga surface is proposed as an extension of the definition in \citet{zenga2007inequality}, and its properties are discussed in Section 4. Section 5 presents a vector-valued bivariate Zenga curve, examines its properties, develops corresponding estimators, and includes a simulation study. Finally, Section 6 concludes the paper with an empirical analysis of bivariate data using the vector-valued bivariate Zenga curve.

\section{First moment distribution and multivariate Zenga surface}
The extension of inequality measures to multiple dimensions requires a robust definition of a 
multivariate quantile function. As detailed by \citet{pr2024multivariate}, for an $n$-dimensional non-negative random vector $\mathbf{X} = (X_1, X_2, \dots, X_n)$ with an absolutely continuous joint distribution function $F(\mathbf{x})$ and survival function $\bar{F}(\mathbf{x})$ where $\mathbf{x} = (x_1, x_2, \dots, x_n)$, 
the multivariate quantile function is constructed recursively using conditional distributions. 
This approach defines the quantile vector $Q(\mathbf{u})$, with $\mathbf{u} = (u_1, u_2, \cdots, u_n)$, as
\begin{equation*}
    Q(\mathbf{u}) = \big(Q_1(u_1), \; Q_{21}(u_1,u_2), \; \dots, \; Q_{n(n - 1) \cdots 21}(u_1, u_2, \cdots, u_n)\big).
\end{equation*}
where
\begin{eqnarray*}
    Q_1 (u_1) &=& Q_1(u_1) = \inf \{x_1 \mid F_1(x_1) \geq u_1\}, \\
    Q_{21}(u_{21}) &=& Q_{21}(u_1, u_2) = \inf \{x_2 \mid F_{21}(Q_1, x_2) \geq u_{21}\}, \\
    \cdots && \cdots \\
    Q_{n\cdots1}(u_{n(n-1)\cdots1}) &=& Q_{n (n - 1) \cdots 21}(u_1, u_2, \cdots, u_n) \\
    &=& \inf \{x_n \mid F_{n(n - 1)\cdots1}(Q_1, Q_{21}, \cdots, Q_{(n-1)(n - 2) \cdots 21}, x_n) \geq u_{n\cdots1}\}
\end{eqnarray*}
where 
\begin{eqnarray*}
    F_{n(n - 1)\cdots1} \left(Q_1, \cdots, Q_{(n-1)(n - 2) \cdots 21}, x_n\right) = P\left(X_n \leq x_n \mid X_1 > Q_1(u_1), \cdots, X_{n - 1} > Q_{n - 1}(u_{n - 1})\right)
\end{eqnarray*}
denotes the $n$th order conditional distribution function conditioned on the preceding marginal quantiles.  This definition ensures that the multivariate quantile function comprehensively captures the dependence structure between the variables.  For various properties of $Q(\mathbf{u})$ we refer to \citet{unnikrishnan2023properties}.
\par To construct a multivariate Zenga surface, one has to define the multivariate first moment distribution.
\begin{defi}
For an $n$-dimensional non-negative random vector 
$\mathbf{X} = (X_1, X_2, \dots, X_n)$
with joint distribution function $F(\mathbf{x})$ and finite product moment 
$\mu = E\!\left(\prod_{i=1}^n X_i \right),$
the multivariate first moment distribution function $F^{(1)}(\mathbf{x})$ is defined as
\begin{equation}
    F^{(1)}(\mathbf{x}) = \frac{1}{\mu} 
\int_0^{x_1} \int_0^{x_2} \cdots \int_0^{x_n} 
t_1 t_2 \cdots t_n \, f(t_1,t_2,\dots,t_n)dt_1dt_2\ldots dt_n.
\end{equation}
where $f(t_1,t_2,\dots,t_n)$ is the joint probability density function.
\end{defi}
This satisfies the necessary boundary conditions  
$F^{(1)}(x_1,\dots,x_{i-1},0,x_{i+1},\dots,x_n) = 0$ for any  $i$, and
$\lim_{x_1\to \infty,\dots,x_n \to \infty} F^{(1)}(x_1,\ldots,x_n) = 1.$
Its corresponding density function is given by
\begin{equation}
   f^{(1)}(\mathbf{x}) = \frac{\prod_{i=1}^n x_i}{\mu} \, f(\mathbf{x}), 
\end{equation}
which explicitly weights the original density by the product of the variable values, assigning 
higher probability mass to regions where all $X_i$ are simultaneously large. The distribution 
$F^{(1)}(\mathbf{x})$ represents the proportion of the total product $X_1 X_2 \cdots X_n$ contributed by all 
components of $\mathbf{X}$ whose values are less than or equal to $\mathbf{x}$.

The quantile function of this multivariate first moment distribution, denoted $Q^{(1)}(\mathbf{u})$, is 
constructed by applying the conditional approach to $F^{(1)}(\mathbf{x})$. It is defined as the 
$n$-tuple
\begin{equation}\label{2.3}
   Q^{(1)}(u) = 
\big(Q_1^{(1)}(u_1), \; 
Q^{(1)}_{21}(u_1,u_2), \; 
\cdots, \; 
Q^{(1)}_{n\ldots1}(u_1,\dots,u_n)\big). 
\end{equation}
where
\begin{eqnarray*}
    Q_1^{(1)}(u_1) &=& \inf \{x_1 \mid F^{(1)}_1(x_1) \geq u_1\}, \\
    Q_{21}^{(1)}(u_1, u_2) &=& \inf \{x_2 \mid F^{(1)}_{21}(Q_1^{(1)}, x_2) \geq u_{21}\}, \\
    \cdots && \cdots \\
    Q_{n\ldots1}^{(1)}(u_1, u_2, \cdots, u_n) &=& \inf \{x_n \mid F^{(1)}_{n\ldots 1}(Q_1^{(1)}, Q_{21}^{(1)}, \cdots, Q^{(1)}_{n-1\cdots 1}, x_n) \geq u_{n\cdots1}\}.
\end{eqnarray*}

The following example illustrates the computation of $Q^{(1)} (u)$ given in the proposed formulation \eqref{2.3}.
\begin{ex}\label{zengaex1}
  Consider a bivariate Pareto distribution with survival function  \[
\bar{F}(x_1,x_2)=(1+x_1+x_2)^{-\alpha},\, x_1,x_2 \geq 0
\] 
and probability density function \[
f(x_1,x_2)=\alpha (\alpha+1)(1+x_1+x_2)^{-(\alpha+2)},
\]  
with $E(X_1X_2)=\frac{1}{(\alpha-1)(\alpha-2)}.$
Then the probability density function of the first moment distribution is obtained as 
\[
f^{(1)}(x_1,x_2) = x_1 x_2\alpha (\alpha+1)(1+x_1+x_2)^{-(\alpha+2)},\,x_1,x_2\geq0,\,\alpha>2
\]
and 
\[
\begin{aligned}
F^{(1)}(x_1,x_2) \;= \; & 1 - (1+x_1)^{-\alpha}\,\big(1+\alpha x_1+(\alpha-1)x_1^2\big)
- (1+x_2)^{-\alpha}\,\big(1+\alpha x_2+(\alpha-1)x_2^2\big) \\
&+ (1+x_1+x_2)^{-\alpha}\,\Big[1+\alpha x_1+\alpha x_2+(\alpha-1)x_1^2+(\alpha-1)x_2^2+\alpha(\alpha-1)x_1x_2\Big].
\end{aligned}
\]
The marginal first moment distribution 
\[
 F_1^{(1)}(x_1) = 1-(1+x_1)^{-\alpha}\,\big(1+\alpha x_1+(\alpha-1)x_1^2\big).
\]
\[
\begin{aligned}
   F^{(1)}_{21}(x_2\mid x_1)
= \; &\frac{1}{1-(1+x_1)^{-\alpha}\,\big(1+\alpha x_1+(\alpha-1)x_1^2\big)}
\big[1-(1+x_1)^{-\alpha}\big(1+\alpha x_1+(\alpha-1)x_1^2\big)\\
&-(1+x_2)^{-\alpha}\big(1+\alpha x_2+(\alpha-1)x_2^2\big)\\
&+(1+x_1+x_2)^{-\alpha}\big(1+\alpha x_1+\alpha x_2+(\alpha-1)x_1^2+(\alpha-1)x_2^2+\alpha(\alpha-1)x_1x_2\big)\big].
\end{aligned}
\]
The bivariate quantile function of the first moment distribution can be obtained by solving the equations $u_1=F^{(1)}_1(x_1)$ and $u_2=F^{(1)}_{21}(x_2|x_1)$. The quantile functions are implicitly defined and for high quantiles, the following asymptotic approximations hold:
\[
Q^{(1)}_1(u_1)\sim \left(\frac{\alpha-1}{1-u_1}\right)^{\frac{1}{\alpha-2}}
\] 
and 
\[
Q^{(1)}_{21}(u_1,u_2)\sim \left(\frac{\alpha-1}{1-u_1+u_1u_2}\right)^{\frac{1}{\alpha-2}}.
\]
Thus \begin{equation}\label{par_firstmoment}
  Q^{(1)}(\mathbf{u} )\sim \Bigg(\left(\frac{\alpha-1}{1-u_1}\right)^{\frac{1}{\alpha-2}}, \left(\frac{\alpha-1}{1-u_1+u_1u_2}\right)^{\frac{1}{\alpha-2}}\Bigg).
\end{equation}
\end{ex}
The multivariate Zenga surface is constructed as a direct generalization of its univariate and bivariate counterparts, providing a multidimensional surface that measures local inequality by comparing the quantiles of the original distribution to the quantiles of the first moment distribution across all variables.
\begin{defi}\label{def2.2}
For a vector of quantiles $\mathbf{u}=(u_1,u_2,\dots,u_n)$, the multivariate Zenga surface is defined as
\begin{equation}\label{definition}
   Z(\mathbf{u}) = Z(u_1,u_2,\dots,u_n) 
= 1 - \frac{Q_1(u_1)\, Q_{21}(u_1,u_2)\, \cdots\, Q_{n\dots1}(u_1,\dots,u_n)}
{Q_1^{(1)}(u_1)\, Q_{21}^{(1)}(u_1,u_2)\, \cdots\, Q_{n\dots1}^{(1)}(u_1,\dots,u_n)}. 
\end{equation}
\end{defi}
This surface measures the relative disparity at each point $\mathbf{u}$ between the product of the joint 
quantiles of the population distribution and the product of the joint quantiles of the share 
distribution of the total product $X_1 X_2 \cdots X_n$.
\par For $n=2$, the definition simplifies to the bivariate Zenga surface:
\[
Z(u_1,u_2) 
= 1 - \frac{Q_1(u_1)\, Q_{21}(u_1,u_2)}
{Q_1^{(1)}(u_1)\, Q_{21}^{(1)}(u_1,u_2)}.
\]
This represents a surface where the height at each point $(u_1,u_2)$ measures the inequality 
between the population and the income share for the respective proportions of the population.
For $n=1$, the definition reduces to the classical univariate Zenga curve \eqref{1.1}:
\[
Z(u) = 1 - \frac{Q(u)}{Q^{(1)}(u)},
\]
which compares the quantiles of the population distribution with those of the income share 
distribution.
This multivariate formulation provides a comprehensive tool for analyzing the complex, 
interconnected structure of inequality between multiple variables, generalizing the intuitive 
comparison of distribution and first-moment quantiles to higher dimensions.
\begin{ex}
    Consider the bivariate Pareto distribution given in Example \ref{zengaex1}, with marginal distribution function, $$F_1(x)=1-(1+ x_1)^{-\alpha}$$ 
    and the conditional distribution $$F_{21}(x_2|x_1)= 1-\left(\frac{1+x_1+x_2}{1+x_1}\right)^{-\alpha}.$$
    The bivariate quantile function for the bivariate Pareto distribution can be obtained  by solving $u_1=F_1(x_1)$ and $u_2=F_{21}(x_2|x_1)$. Thus $$Q_1(u_1)=(1-u_1)^{-\frac{1}{\alpha}}-1$$
    and $$Q_{21}(u_1,u_2)=(1-u_1)^{-\frac{1}{\alpha}}\big((1-u_2)^{-\frac{1}{\alpha}}-1\big).$$
    Hence \begin{equation}\label{zenga_paretoqauntile}
    Q(\mathbf{u})=\big( (1-u_1)^{-\frac{1}{\alpha}}-1  ,  (1-u_1)^{-\frac{1}{\alpha}}\big((1-u_2)^{-\frac{1}{\alpha}}-1\big)\big) .   
    \end{equation}
    The bivariate asymptotic quantile function for the first moment distribution is given in \eqref{par_firstmoment}.
    Thus, the bivariate Zenga surface for the bivariate Pareto distribution can be obtained from \eqref{definition} as 
    \begin{equation}\label{zenga_pareto}
    \begin{aligned}
        Z(\mathbf{u})=&Z(u_1,u_2)\sim 1-\frac{ ((1-u_1)^{-\frac{1}{\alpha}}-1)   (1-u_1)^{-\frac{1}{\alpha}}\big((1-u_2)^{-\frac{1}{\alpha}}-1\big)}{(\frac{\alpha-1}{1-u_1})^{\frac{1}{\alpha-2}}(\frac{\alpha-1}{1-u_1+u_1u_2})^{\frac{1}{\alpha-2}}}\\
      \end{aligned}  
    \end{equation}
    The graph plotted for $\alpha=3$ is given in Figure \ref{paretozengacurve}.
    \begin{figure}
        \centering
        \includegraphics[width=0.5\linewidth]{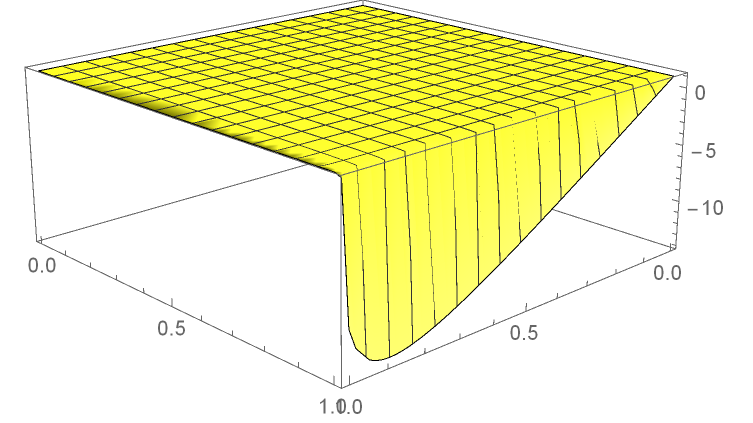}
        \caption{Bivariate Zenga surface for Pareto distribution}
        \label{paretozengacurve}
    \end{figure}
\end{ex}

The properties of the multivariate Zenga surface can be most conveniently derived and illustrated 
first in the bivariate case, with the understanding that the logic extends naturally to higher 
dimensions. This approach is standard, as the bivariate case captures the essential structure 
involving conditional distributions and dependence, without the overwhelming notation of the 
general multivariate setting.

In the univariate case, it is established that the Zenga curve $Z(u)$ is bounded between $0$ and $1$. 
We now examine whether Definition \ref{def2.2} possess this property by considering the bivariate case, as the multivariate extension is straightforward.

\begin{thm}[Boundedness of the Bivariate Zenga surface]\label{thm2.1}
\par Let $(X_1, X_2)$ be a pair of non-negative random variables with a finite mean product 
$\mu = E(X_1 X_2) < \infty$. 
The bivariate Zenga surface $Z(u_1,u_2)$, defined for $0 < u_1, u_2 < 1$, is given by
\[
Z(u_1,u_2) 
= 1 - \frac{Q_1(u_1)\, Q_{21}(u_1,u_2)}
{Q_1^{(1)}(u_1)\, Q_{21}^{(1)}(u_1,u_2)},
\]
and is bounded such that
\[
0 \leq Z(u_1,u_2) < 1.
\]
\end{thm}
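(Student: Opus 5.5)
The plan is to show that each factor in the numerator is dominated by the corresponding factor in the denominator:
\[
Q_1(u_1)\le Q_1^{(1)}(u_1), \qquad Q_{21}(u_1,u_2)\le Q_{21}^{(1)}(u_1,u_2).
\]
Since all quantities are non-negative, these two inequalities give $0\le Q_1Q_{21}/(Q_1^{(1)}Q_{21}^{(1)})\le 1$, and hence $Z(u_1,u_2)\ge 0$. The upper bound $Z<1$ is equivalent to $Q_1(u_1)Q_{21}(u_1,u_2)>0$. This holds for $0<u_1,u_2<1$ whenever the distribution has no atom at the origin in either coordinate, which is implicit in the absolute continuity assumed throughout Section 2. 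So the real content of the theorem is the two quantile dominations.

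For the first domination I would write
\[
F_1^{(1)}(x_1)=\frac{1}{\mu}E\big[m(X_1)\,\mathbf 1\{X_1\le x_1\}\big], \qquad m(t)=t\,E(X_2\mid X_1=t).
\]
The inequality $F_1^{(1)}(x_1)\le F_1(x_1)$ is equivalent to
\[
E\big[m(X_1)\mathbf 1\{X_1\le x_1\}\big]\le E[m(X_1)]\,P(X_1\le x_1).
\]
This is a Chebyshev (covariance) inequality between the nondecreasing function $m$ and the nonincreasing indicator $\mathbf 1\{X_1\le x_1\}$, and it holds as soon as $m$ is nondecreasing. Once $F_1^{(1)}\le F_1$ pointwise, the infimum definitions of the quantiles immediately give $Q_1\le Q_1^{(1)}$.

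The second coordinate is more delicate, and I expect it to be the main obstacle. The original and moment-weighted conditional distributions are conditioned on different events, namely $X_1>Q_1(u_1)$ and $X_1>Q_1^{(1)}(u_1)$. I would therefore split the comparison into two steps:
\[
Q_{21}(u_1,u_2)\;\le\;\tilde Q_{21}(u_1,u_2)\;\le\;Q^{(1)}_{21}(u_1,u_2).
\]
Here $\tilde Q_{21}(u_1,u_2)$ is the quantile of $X_2$ conditioned on the larger threshold $X_1>Q_1^{(1)}(u_1)$.
\begin{itemize}
\item[(a)] The first step follows if $P(X_2\le x_2\mid X_1>s)$ is nonincreasing in $s$, a right-tail-increasing type positive dependence, combined with $Q_1\le Q_1^{(1)}$.
\item[(b)] The second step is again a Chebyshev argument, now under the conditional law given $X_1>s$. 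Weighting by $x_1x_2$ shifts mass to larger $x_2$ provided $x_2\,E(X_1\mid X_2=x_2,X_1>s)$ is nondecreasing in $x_2$.
\end{itemize}

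I would be explicit that some such positive-dependence/monotonicity condition is needed. Without it, a distribution where $X_2$ is very large when $X_1$ is small could make the product-weighted marginal of $X_1$ stochastically smaller than $F_1$, and the surface could then become negative. So the proof would state the condition as the natural bivariate analogue of the univariate fact that $x$ is increasing, and verify it for the running Pareto example. In the independent case the whole argument collapses to two applications of the univariate result $Q(u)\le Q^{(1)}(u)$, which I would note as a sanity check.
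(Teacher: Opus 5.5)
Your argument is correct under the hypotheses you add, and those hypotheses are genuinely needed. The skeleton is the paper's: dominate each numerator factor by the corresponding first-moment quantile, pass from $F^{(1)}\le F$ to quantiles through the infimum, and get $Z<1$ from positivity.

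The difference is in how the two dominations are justified. The paper asserts them with no hypothesis. It cites Jain and Nanda to claim that the $x_1x_2$-weighted law is stochastically larger than $F$, hence $F_1^{(1)}\le F_1$. For the second coordinate it notes that $x_2 f_{21}(x_2\mid x_1)/E(X_2\mid X_1=x_1)$ is a length-biased version of the law of $X_2$ given $X_1=x_1$.

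The first step is not valid in general. Stochastic dominance of a law reweighted by an increasing $w$ amounts to $\mathrm{Cov}(w(\mathbf X),\phi(\mathbf X))\ge 0$ for increasing $\phi$, which needs a positive-dependence property such as association. The second step compares the wrong objects. It uses a common point $X_1=x_1$, whereas the definition in Section 2 conditions on $X_1>Q_1(u_1)$ under $F$ and on $X_1>Q_1^{(1)}(u_1)$ under $F^{(1)}$. So the threshold shift that your step (a) handles via RTI is never addressed.

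Your Chebyshev formulation makes exactly these hidden requirements explicit: $m(t)=t\,E(X_2\mid X_1=t)$ nondecreasing, RTI, and $x_2E(X_1\mid X_2=x_2,X_1>s)$ nondecreasing. They do hold for the bivariate Pareto example of Section 2, where $m(t)=t(1+t)/\alpha$, $P(X_2>x_2\mid X_1>x_1)=(1+x_2/(1+x_1))^{-\alpha}$, and $E(X_1\mid X_2=x_2,X_1>s)=s+(1+x_2+s)/\alpha$. So the worked example is unaffected.

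Your caution is justified: without such a hypothesis the statement is false. Take $X_1\sim U(1,2)$ and $X_2=X_1^{-3}V$, with $V$ independent of $X_1$ and uniform on $[1-\epsilon,1+\epsilon]$. The law is absolutely continuous and $X_1X_2\le 1+\epsilon$. Under $F^{(1)}$, $X_1$ and $V$ remain independent and $X_1$ has density proportional to $x^{-2}$ on $[1,2]$. Hence $Q_1^{(1)}(1/2)=4/3<3/2=Q_1(1/2)$; here $m(t)=t^{-2}$ is decreasing.

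Now let $\epsilon\to 0$. At $u_2=1/10$, the conditional quantile of $X_2$ given $X_1>3/2$ under $F$ tends to $1.95^{-3}$. The conditional quantile given $X_1>4/3$ under $F^{(1)}$ tends to $0.525^{3}$. Therefore $Z(1/2,1/10)\to 1-\frac{(3/2)\cdot 1.95^{-3}}{(4/3)\cdot 0.525^{3}}\approx -0.0485<0$.

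The paper's route is shorter only because it relies on an unstated dependence assumption. Yours proves a correct, hypothesis-qualified version of the theorem. The independent case, where your argument reduces to two univariate size-biasing inequalities, is a sound sanity check.
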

\begin{proof}
We prove the two inequalities separately.  To show that:
\[
Z(u_1,u_2)\geq 0\iff 1-\frac{Q_1(u_1)Q_{21}(u_1,u_2)}{Q_1^{(1)}(u_1)Q_{21}^{(1)}(u_1,u_2)}\geq 0 \iff 
\frac{Q_1(u_1)Q_{21}(u_1,u_2)}{Q_1^{(1)}(u_1)Q_{21}^{(1)}(u_1,u_2)}\leq 1.
\]
This requires showing that
\[
Q_1(u_1)\leq Q_1^{(1)}(u_1),\qquad Q_{21}(u_1,u_2)\leq Q_{21}^{(1)}(u_1,u_2).
\]
The first-moment distribution $F^{(1)}(x_1,x_2)$ is defined as
\[
dF^{(1)}(x_1,x_2)=\frac{x_1x_2}{\mu}\,dF(x_1,x_2).
\]
For $x_1,x_2>0$, a non-decreasing weight function
$w(x_1, x_2) = x_1 x_2,$ it follows from  \cite{jain1995multivariate}
that the weighted version is stochastically larger than the original distribution. Hence, 
\[
F^{(1)}(x_1,x_2)\leq F(x_1,x_2).
\]
Since $F^{(1)}(x_1)\leq F_1(x_1)$ for all $x_1$, it follows that 
\[
\{x_1 : F^{(1)}(x_1)\geq u_1\} \subseteq \{x_1 : F_1(x_1)\ge u_1\}.
\]
Taking infimum on both sides yields, $\inf\{x_1 : F^{(1)}(x_1)\ge u_1\} \ge \inf\{x_1 : F_1(x_1)\ge u_1\}$, or equivalently,
\[
Q_1^{(1)}(u_1) \geq Q_1(u_1),
\]
for all $0<u_1<1$.
Further, for each fixed $x_1$, the conditional density of the weighted distribution is
\[
f^{(1)}_{21}(x_2 \mid x_1)
= \frac{x_2 f_{21}(x_2 \mid x_1)}{E[X_2 \mid X_1=x_1]},
\]
which is the weighted version of $X_2 \mid X_1=x_1$. Therefore, $F^{(1)}_{21}(x_2 \mid x_1)\le F_{21}(x_2 \mid x_1)$, and further $Q_{21}^{(1)}(u_1,u_2)\ge Q_{21}(u_1,u_2)$.  Thus, we obtain
\[
Q_1^{(1)}(u_1)Q_{21}^{(1)}(u_1,u_2)\geq Q_1(u_1)Q_{21}(u_1,u_2),
\]
proves, $Z(u_1,u_2)\geq 0$.
\par To show that:
\[
Z(u_1,u_2)<1\iff 1-\frac{Q_1(u_1)Q_{21}(u_1,u_2)}{Q_1^{(1)}(u_1)Q_{21}^{(1)}(u_1,u_2)}<1\iff 
\frac{Q_1(u_1)Q_{21}(u_1,u_2)}{Q_1^{(1)}(u_1)Q_{21}^{(1)}(u_1,u_2)}>0.
\]
Since $X_1,X_2>0$, all quantiles $Q_1(u_1),Q_{21}(u_1,u_2),Q_1^{(1)}(u_1),Q_{21}^{(1)}(u_1,u_2)$ are positive. Hence, the fraction is positive, which proves that $Z(u_1,u_2)<1$.
Combining both parts, we conclude that 
$0 \leq Z(u_1,u_2) < 1$ for all  $0<u_1,u_2<1$.
\end{proof}
\begin{thm}[Perfect Equality for the Bivariate Zenga surface]\label{thm2.2}  
Let $(X_1, X_2)$ be a pair of non-negative random variables with a finite mean product $\mu = E(X_1 X_2) < \infty$.  
The bivariate Zenga surface $Z(u_1,u_2)$ is identically zero for all $u_1,u_2 \in (0,1)$, if and only if the distribution is degenerate,  
i.e., there exist constants $c_1>0$ and $c_2>0$ such that $X_1=c_1$ and $X_2=c_2$ almost surely.  
This represents a state of perfect equality where $X_1$ and $X_2$ are equally allocated in direct proportion to the population.  
\end{thm}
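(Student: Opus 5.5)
The plan is to prove the two directions separately, working with the componentwise inequalities established in the proof of Theorem \ref{thm2.1}, namely $Q_1(u_1)\le Q_1^{(1)}(u_1)$ and $Q_{21}(u_1,u_2)\le Q^{(1)}_{21}(u_1,u_2)$.

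\emph{Sufficiency.} Suppose $X_1=c_1$ and $X_2=c_2$ almost surely. Then $\mu=c_1c_2$, and $dF^{(1)}=(x_1x_2/\mu)\,dF$ puts mass $c_1c_2/\mu=1$ at the same atom $(c_1,c_2)$. Hence $F^{(1)}=F$, so every marginal and conditional quantile of the two laws coincides. In particular $Q_1^{(1)}(u_1)=Q_1(u_1)=c_1$ and $Q_{21}^{(1)}(u_1,u_2)=Q_{21}(u_1,u_2)=c_2$, which gives $Z\equiv 0$. The paper assumes absolute continuity, so for this step I will treat distribution functions and quantiles in the general sense, via the infimum definitions.

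\emph{Necessity.} Suppose $Z(u_1,u_2)=0$ for all $u_1,u_2\in(0,1)$. Then $Q_1Q_{21}=Q_1^{(1)}Q_{21}^{(1)}$. Since each factor on the left is dominated by the corresponding factor on the right, and all quantities are positive, equality of the products forces equality factorwise. This holds in particular for the marginal, $Q_1(u_1)=Q_1^{(1)}(u_1)$ for all $u_1$, so $F_1=F_1^{(1)}$.

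Now the marginal of the first moment distribution has density $x_1 m(x_1)f_1(x_1)/\mu$, where $m(x_1)=E[X_2\mid X_1=x_1]$. Equality of laws gives $x_1m(x_1)=\mu$ for $F_1$-a.e.\ $x_1$. The same argument on the conditional factor gives, for each fixed $x_1$, that the size-biased law of $X_2\mid X_1=x_1$ equals the law of $X_2\mid X_1=x_1$. Since the size-biased law has density $x_2 f_{21}(x_2\mid x_1)/m(x_1)$, we get $x_2=m(x_1)$ on the support. A random variable equal to its own size-biased version is degenerate, because $E[X\,\mathbf 1(X\le t)]=E[X]\,P(X\le t)$ for all $t$ forces $\mathrm{Cov}(X,\mathbf 1(X\le t))=0$, and hence $X$ is constant. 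So $X_2\mid X_1=x_1$ is degenerate at $m(x_1)=\mu/x_1$.

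The remaining step, which I expect to be the main obstacle, is to show that $X_1$ itself is degenerate. The marginal identity only says the law of $X_1$ equals a reweighting by $X_1m(X_1)$, and we have just seen that this weight is constant, so the marginal equality carries no information about $X_1$. Indeed, $X_2=\mu/X_1$ with nondegenerate $X_1$ is a genuine degenerate-conditional structure.

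I would first handle this through the convention that the conditional quantile conditions on $X_1>Q_1(u_1)$, as in the paper's definition of $F_{21}$, rather than on $X_1=x_1$. Under that convention, $X_2\mid X_1>Q_1(u_1)$ must be degenerate for every $u_1$. Letting $u_1\to 0$ makes $X_2$ degenerate at some $c_2$. Then $X_1X_2=c_2X_1$, and the marginal first moment weight becomes $x_1/E[X_1]$, so $Q_1=Q_1^{(1)}$ forces $X_1$ to equal its own size-biased version, hence $X_1=c_1$.

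If the conditioning is instead pointwise, I would note that the claim needs this interpretation, or an extra condition, because the counterexample $X_2=\mu/X_1$ with nondegenerate $X_1$ would break it. Positivity of $c_1$ and $c_2$ follows from the finiteness and positivity requirements implicit in Theorem \ref{thm2.1}.
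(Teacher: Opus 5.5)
Your sufficiency argument and your reduction to the factorwise equalities $Q_1=Q_1^{(1)}$, $Q_{21}=Q_{21}^{(1)}$ are the same as the paper's. Your diagnosis of the next step is also correct, and it is exactly where the paper's own proof fails. The paper asserts that $F_1^{(1)}=F_1$ holds iff $X_1=c_1$. But the marginal of the bivariate first-moment law has density $x_1m(x_1)f_1(x_1)/\mu$, so that identity only gives $x_1m(x_1)=\mu$. Your pointwise analysis ($X_2\mid X_1=x_1$ degenerate at $\mu/x_1$) is right, and your counterexample is genuine.

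\textbf{The gap: the rescue via conditioning on $\{X_1>Q_1(u_1)\}$.} Under that convention the first-moment conditional law is
$P^{(1)}(X_2\le x_2\mid X_1>t)=E[X_1X_2\mathbf{1}(X_2\le x_2,\,X_1>t)]/E[X_1X_2\mathbf{1}(X_1>t)]$.
The factor $x_1$ varies over the conditioning event and does not cancel. So this is not the size-biased-in-$x_2$ version of $P(X_2\le x_2\mid X_1>t)$, and equality of the two does not make $X_2\mid X_1>t$ degenerate. The factorwise step would also need a new dominance argument here, since the proof of Theorem \ref{thm2.1} uses the pointwise density $f^{(1)}_{21}(x_2\mid x_1)$. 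Combined with the marginal identity $E[X_1X_2\mathbf{1}(X_1>t)]=\mu P(X_1>t)$, the two equalities give $E[X_1X_2\mathbf{1}(X_2\le x_2,\,X_1>t)]=\mu\,P(X_2\le x_2,\,X_1>t)$ on a generating $\pi$-system. That is again just $X_1X_2=\mu$ a.s.

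More decisively, your counterexample does not depend on the convention. If $X_1X_2=\mu$ a.s., then $dF^{(1)}=(x_1x_2/\mu)\,dF=dF$ as joint laws, so $Q^{(1)}=Q$ componentwise and $Z\equiv 0$ however the conditional quantiles are defined. For example, $X_1\sim U[1,2]$ and $X_2=1/X_1$ satisfy every hypothesis of the statement: both are non-negative, $E(X_1X_2)=1$, and all quantiles are positive. Yet $Z\equiv0$ and the pair is not degenerate. You cannot appeal to the absolute-continuity assumption of Section 2 to exclude this, because the statement's own degenerate case is not absolutely continuous.

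So the ``only if'' direction cannot be proved as stated, either by your route or by the paper's. The right conclusion of your analysis is the counterexample, not a rescue. What the factorwise equalities actually yield is that $Z\equiv0$ iff $X_1X_2$ is a.s.\ constant. Degeneracy then follows only under an extra hypothesis excluding counter-monotone pairs such as $X_2=\mu/X_1$:
\begin{itemize}
\item independence, or
\item positive quadrant dependence (implied by right-tail increasingness). If $X_1$ is nondegenerate and $X_2=\mu/X_1$, then for any $a$ strictly between the essential infimum and supremum of $X_1$ one gets $P(X_1>a,\,X_2>\mu/a)=0<P(X_1>a)P(X_1<a)=P(X_1>a)P(X_2>\mu/a)$, which violates positive quadrant dependence.
\end{itemize}
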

\begin{proof}
We prove both directions of the implication. First, assume $X_1=c_1$ and $X_2=c_2$ almost surely, with $c_1,c_2>0$.  

In this case of a degenerate distribution, the original quantile functions are simply the constants themselves,  
$Q_1(u_1)=c_1$ and $Q_{21}(u_1,u_2)=c_2$ for all $u_1,u_2$.  

The mean $\mu=c_1c_2$.  
The first moment distribution is defined by  
\[
dF^{(1)}(x_1,x_2)=\frac{x_1x_2}{\mu}\, dF(x_1,x_2).
\]  
Since the original distribution $dF(x_1,x_2)$ is a point mass at $(c_1,c_2)$, the first moment density becomes  
\[
dF^{(1)}(x_1,x_2) = \frac{c_1c_2}{c_1c_2} \, dF(x_1,x_2) = dF(x_1,x_2).
\]  
Consequently, the first moment distribution is identical to the original distribution, and its quantile functions are also the constants. Thus $Q_1^{(1)}(u_1)=c_1$ and $Q_{21}^{(1)}(u_1,u_2)=c_2$ for all $u_1,u_2$.  Substituting these into the definition of the Zenga surface yields  
\[
Z(u_1,u_2) = 1 - \frac{c_1 \cdot c_2}{c_1 \cdot c_2} = 0
\]  
for all $u_1,u_2$. Thus, perfect equality implies the Zenga surface is identically zero.  

Conversely, assume $Z(u_1,u_2)=0$ for all $u_1,u_2 \in (0,1)$.  
This assumption implies that  
\[
1 - \frac{Q_1(u_1)\cdot Q_{21}(u_1,u_2)}{Q_1^{(1)}(u_1)\cdot Q_{21}^{(1)}(u_1,u_2)}=0,
\]  
which is equivalent  to $Q_1(u_1)\cdot Q_{21}(u_1,u_2)=Q_1^{(1)}(u_1)\cdot Q_{21}^{(1)}(u_1,u_2)$.  

From the property of stochastic dominance, we know that $Q_1^{(1)}(u_1)\geq Q_1(u_1)$ and $Q_{21}^{(1)}(u_1,u_2)\geq Q_{21}(u_1,u_2)$.  
For the product of these larger quantiles to equal the product of the smaller ones, and all the terms are positive the individual equalities must hold for all $u_1,u_2$;  
that is, $Q_1^{(1)}(u_1)=Q_1(u_1)$ and $Q_{21}^{(1)}(u_1,u_2)=Q_{21}(u_1,u_2)$.  

The equality of the first marginal quantile functions, $Q_1(u_1)=Q_1^{(1)}(u_1)$ for all $u_1$,  
implies that the first marginal distribution of the first moment distribution is identical to the first marginal of the original distribution.  
This identity, $F^{(1)}_1(x_1)=F_1(x_1)$, holds if and only if $X_1=c_1>0$ almost surely.  

Similarly, given $X_1=c_1$, the equality of the conditional quantiles $Q_{21}(u_1,u_2)=Q_{21}^{(1)}(u_1,u_2)$ for all $u_2$  
implies that the conditional distribution of $X_2$ given $X_1=c_1$ is the same under both $F$ and $F^{(1)}$.  
This identity holds if and only if $X_2=c_2>0$ almost surely.  

Therefore, $Z(u_1,u_2)=0$ for all $u_1,u_2$ if and only if $X_1=c_1$ and $X_2=c_2$ almost surely,  
which is the state of perfect equality.  
\end{proof}
In the univariate case, the Zenga index is obtained as an average of the Zenga curve over the unit interval. 
Motivated by this, we extend the idea to the bivariate setting by it as the average of the bivariate Zenga surface over the unit square.
\begin{defi}
Let $Z(u_1,u_2)$ be the bivariate Zenga surface defined on $[0,1]^2$. 
The corresponding bivariate Zenga index is defined as
\[
\xi = \int_0^1 \int_0^1 Z(u_1,u_2)\, du_1\, du_2.
\]
\end{defi}

\section{A new bivariate Zenga Surface}
Another version of the Zenga curve was introduced by \citet{zenga2007inequality} based on the conditional expectation that uniquely determines the distribution.  Let $X$ be a non-negative continuous random variable defined on
$0 \le x \le b < \infty$, with distribution function $F(x)$, density function $f(x)$,
and finite positive mean $\mu$.  The measure of inequality is defined by \citet{zenga2007inequality},
\begin{equation}\label{new zenga}
    A(x) = 1 - \frac{\mu^{-}(x)}{\mu^{+}(x)},
\end{equation}
where $\mu^{-}(x)=E(X|X\leq x)$ and $\mu^{+}(x)=E(X|X> x)$ denote the lower and upper partial means, respectively. Unlike $Z(u)$, it has been shown by \citet{zenga2007inequality} that $A(\cdot)$ in \eqref{new zenga} uniquely determines the underlying distribution $F(\cdot)$.  Motivated by this, we propose a bivariate extension of the Zenga curve \eqref{new zenga} that uniquely determines the distribution.
\par Let $(X_1,X_2)$ be a bivariate  non-negative absolutely continuous random vector with joint distribution function $F(x_1,x_2)$, probability density function $f(x_1,x_2)$, with finite expectation $E(X_1 X_2)$.  Then a bivariate measure of inequality $A(x_1,x_2)$, as an extension of the Zenga curve \eqref{new zenga} is defined as,
\begin{equation}\label{newzengabiv}
   A(x_1, x_2) = 1 - \frac{\mu^{-}(x_1, x_2)}{\mu^{+}(x_1, x_2)}, 
\end{equation}
where
\begin{equation}\label{3.3}
    \mu^{-}(x_1, x_2)
= \mathbb{E}(X_1 X_2 \mid X_1 \le x_1,\; X_2 \le x_2) = \frac{1}{F(x_1, x_2)}\int_{0}^{x_1} \int_{0}^{x_2} t_1 t_2 f(t_1, t_2)\, dt_1\, dt_2,
\end{equation}
and
\begin{equation}\label{3.4}
    \mu^{+}(x_1, x_2)
= \mathbb{E}(X_1 X_2 \mid X_1 > x_1,\; X_2 > x_2) = \frac{1}{\bar{F}(x_1, x_2)} \int_{x_1}^{\infty} \int_{x_2}^{\infty} t_1 t_2 f(t_1, t_2)\, dt_1\, dt_2,
\end{equation}
denote respectively the lower and upper partial product moments in the bivariate case.  The measure compares the average and captures the relative standing of the
lower income group compared to the upper income group, and provides a natural inequality measure that ranges between $0$ and $1$.  The multivariate analogue of \eqref{newzengabiv} is straightforward.
\begin{ex}\label{ex3.1}
     Consider a bivariate Pareto distribution with joint survival function
\[
\bar F(x_1,x_2)
=
(x_1+x_2-1)^{-c},
\qquad 0<x_1, x_2<1; \, x_1+x_2\geq 1; \, c>0,
\]
and joint probability density function $f(x_1,x_2)
=
c(c+1)\,(x_1+x_2-1)^{-c-2}, \;
0<x_1, x_2<1$ 
such that the partial moments, 
\[
\mu^-(x_1,x_2) = {\frac{(x_2-1)^{1-c} - (x_1+x_2-1)^{1-c}}{(c+1)\left[(x_1+x_2-1)^{-c} - x_1^{-c} - x_2^{-c} + 1\right]}}
\]
and
\[
\mu^{+}(x_1, x_2) = \frac{x_1^{1-c} - (x_1+x_2-1)^{1-c}}{c+1}
\]
Then, 
\[
A(x_1,x_2) = 1-\frac{\left((x_2-1)^{1-c} - (x_1+x_2-1)^{1-c}\right)\left(x_1^{1-c} - (x_1+x_2-1)^{1-c}\right)}{x_1^{1-c} - (x_1+x_2-1)^{1-c}}.
\]

\end{ex}
To handle various types of random variables, including discrete, continuous and mixed types
in higher dimension and to address the distribution functions that lack closed form expressions and the distributions that are specified by quantile functions we need to obtain the bivariate inequality $A(x_1,x_2)$ in terms of quantile functions.
From \citet{unnikrishnan2023properties}, the bivariate quantile function is given as a vector, $Q(u_1,u_2) = \left(Q_1(u_1), Q_{21}(u_1, u_2)\right)$. We have the transformations $u_1=F_1(x_1)$, $u_2=F_{21}(Q_1,Q_{21})$, $F(Q_1 (u_1), Q_{21}(u_1, u_2)) = u_1u_{2}$ and $\bar F(Q_1 (u_1), Q_{21}(u_1, u_2)) = (1-u_1)(1-u_{2})$. Also from \citet{unnikrishnan2023properties}, for any measurable function $g$,
\[
\int_{\mathbb{R}^2} g(x_1, x_2)F_1(dx_1) F_{21}(dx_2)
=
\int_{I^2} g(Q(u_1,u_2)) du_1 du_2.
\]
Applying these, the lower and upper product partial moments \eqref{3.3} and \eqref{3.4} in terms of the bivariate quantile framework can be written respectively as
\[
M^{-}(u_1, u_2)
=
\frac{1}{u_1 u_2}
\int_{0}^{u_1} \int_{0}^{u_2}
Q_1(p_1)\, Q_{21}(p_1,p_2) dp_1 dp_2,
\]
and
\[
M^{+}(u_1, u_2)
=
\frac{1}{(1-u_1)(1-u_2)}
\int_{u_1}^{1} \int_{u_2}^{1}
Q_1(p_1)\, Q_{21}(p_1,p_2) dp_1 dp_2.
\]
Then, the quantile based bivariate Zenga measure corresponding to $A(x_1, x_2)$ is given by
\begin{equation}\label{newbiv_zenga_quantile}
    I(u_1, u_2) = 1 - \frac{M^{-}(u_1, u_2)}{M^{+}(u_1, u_2)}, \; \text{for all } \; u_1,u_2 \in (0,1).
\end{equation}
\par To extend the measure to the multivariate case, we proceed as follows.  Let $(X_1, X_2, \ldots, X_n)$ be a random vector with an absolutely continuous  joint distribution $F(x_1, x_2, \ldots, x_n)$ and $\mathbb{E}(X_1 X_2 \cdots X_n) < \infty$.  Then the multivariate Zenga measure is defined as
\begin{small}
 \begin{equation}\label{zenga_multi}
    I(u_1, u_2, \ldots, u_n)
=
1 -
\frac{
\displaystyle
\left(\prod_{i=1}^{n} (1-u_i)\right)
\int_{0}^{u_1} \int_{0}^{u_2} \cdots \int_{0}^{u_n}
 Q_1(p_1)Q_{21}(p_1,p_2)\cdots Q_{n\cdots1}(p_1, p_2, \ldots, p_n)\, dp_1 \cdots dp_n
}{
\displaystyle
\left(\prod_{i=1}^{n} u_i\right) 
\int_{u_1}^{1} \int_{u_2}^{1} \cdots \int_{u_n}^{1}
 Q_1(p_1)Q_{21}(p_1,p_2)\cdots Q_{n\cdots1}(p_1, p_2,\ldots, p_n)\, dp_1 \cdots dp_n
}. 
 \end{equation}   
\end{small}
For $n = 1$, this expression reduces to the 
\[
I(u)=1 -\frac{(1-u) \int_{0}^{u} Q(p)\, dp}{u \int_{u}^{1} Q(p)\, dp},
\]
which coincides with the univariate  definition due to \citet{nair2012some} based on the Zenga curve given in \eqref{new zenga}.
\begin{ex}
Consider the distribution defined by quantile functions
\[
Q_1(u_1) = K_1 u_1^{b_1}, 
\qquad
Q_{21}(u_1, u_2) = K_2 u_1^{b_1} u_2^{b_2},
\]
where $K_1,K_2>0; \; 0 \leq u_1, u_2 \leq 1$ and $b_1,b_2>0$.  Then
\[
x_1 = K_1 u_1^{b_1}
\quad \Rightarrow \quad
u_1 = \left(\frac{x_1}{K_1}\right)^{1/b_1},
\]
and
\[
x_2 =  K_2 u_1^{b_1} u_2^{b_2}
\quad \Rightarrow \quad
u_2 = \left(\frac{K_1x_2}{K_2x_1 }\right)^{1/b_2}.
\]
Hence, the joint distribution function can be written as
\[
F(x_1,x_2)=\left(\frac{x_1}{K_1}\right)^{1/b_1}
\left(\frac{K_1x_2}{K_2x_1 }\right)^{1/b_2}.
\]
For $0<x_1,x_2<1$ this is  a bivariate power distribution.  Then the Zenga surface \eqref{newbiv_zenga_quantile} becomes,
\[
I(u_1,u_2) = \frac{1 - u_1^{b_1+1} - u_2^{b_2+1}}{(u_1^{b_1+1}-1)(u_2^{b_2+1}-1)}.
\]
The plot for bivariate Zenga surface of the power distribution for $b_1 = 2, b_2 = 3$ is given in Figure \ref{zenga_power}.
\begin{figure}
        \centering
     \includegraphics[width=0.5\linewidth]{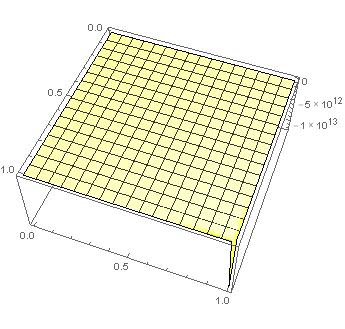}
        \caption{Bivariate Zenga Surface for Power distribution}
        \label{zenga_power}
    \end{figure}
\end{ex}
\section{Properties of $I(u_1,u_2)$}

We now examine the properties of the bivariate Zenga surface $I(u_1,u_2)$.  The first property is based on the RTI, which is defined as follows.
\begin{defi}
   Let $(X_1, X_2)$ denote a continuous random vector with joint distribution function $F(\cdot, \cdot)$.  Then $X_i$ is said to be right tail increasing (RTI) in $X_j$ if
   ${P(X_i>x_i|X_j>x_j)}$ is increasing in $x_i$ for all $x_j$,
   or equivalently
   $P(X_i\leq x_i|X_j>x_j)$ is decreasing in $x_i$ for all $x_j$ where $i, j = 1,2, \; i \neq j$.
   
\end{defi}
\begin{thm}\label{thm4.1}
 Under RTI, the bivariate Zenga surface bounded by, $0 \le I(u_1,u_2) \le 1.$
\end{thm}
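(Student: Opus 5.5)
Since $I(u_1,u_2)=1-M^{-}(u_1,u_2)/M^{+}(u_1,u_2)$ and both partial means are averages of the nonnegative function $Q_1(p_1)Q_{21}(p_1,p_2)$, the upper bound $I\le 1$ is immediate: $M^{-}\ge 0$ and $M^{+}>0$ (the integrand is positive on a set of positive measure in $[u_1,1]\times[u_2,1]$ whenever $X_1X_2$ is not a.s.\ zero), so $M^{-}/M^{+}\ge 0$. The substantive part is $I\ge 0$, i.e. $M^{-}(u_1,u_2)\le M^{+}(u_1,u_2)$, equivalently $\mu^{-}(x_1,x_2)\le\mu^{+}(x_1,x_2)$ at $x_1=Q_1(u_1)$, $x_2=Q_{21}(u_1,u_2)$. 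This is where I plan to use RTI.

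The approach is to compare both partial means with the value of the product at the threshold point. On the lower region $\{X_1\le x_1,X_2\le x_2\}$ we trivially have $X_1X_2\le x_1x_2$, so $\mu^{-}(x_1,x_2)\le x_1x_2$. On the upper region $\{X_1>x_1,X_2>x_2\}$ we have $X_1X_2> x_1x_2$, so $\mu^{+}(x_1,x_2)\ge x_1x_2$. Hence $\mu^{-}\le x_1x_2\le \mu^{+}$, giving $0\le I$. In quantile form this is the statement that $Q_1(p_1)Q_{21}(p_1,p_2)$ is bounded by $Q_1(u_1)Q_{21}(u_1,u_2)$ on $[0,u_1]\times[0,u_2]$ and bounded below by it on $[u_1,1]\times[u_2,1]$. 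Here is exactly where RTI enters: $Q_1$ is nondecreasing automatically, but monotonicity of $Q_{21}(p_1,p_2)$ in $p_1$ is not automatic, since the conditioning event in $F_{21}$ is $\{X_1>Q_1(p_1)\}$. RTI of $X_2$ in $X_1$ says $P(X_2\le x_2\mid X_1>x_1)$ is decreasing in $x_1$, so as $p_1$ grows the conditional distribution function $F_{21}(\cdot\mid\cdot)$ decreases pointwise and the conditional quantile $Q_{21}(p_1,p_2)$ increases in $p_1$; it is increasing in $p_2$ as a quantile. Thus the product $Q_1(p_1)Q_{21}(p_1,p_2)$ is coordinatewise nondecreasing, which gives the two bounds on the rectangles and so $M^{-}\le Q_1(u_1)Q_{21}(u_1,u_2)\le M^{+}$.

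The main obstacle is this monotonicity step: translating RTI (a statement about $P(X_2\le x_2\mid X_1>x_1)$, with the roles of $i,j$ as in the definition) into monotonicity of the conditional quantile in $p_1$, handling the infimum definition carefully (pointwise ordering of distribution functions implies the reverse ordering of generalized inverses, as in the proof of Theorem \ref{thm2.1}). Once that is in place, the rest is averaging a monotone function over a lower versus an upper rectangle, and the inequalities $0\le I(u_1,u_2)\le 1$ follow.
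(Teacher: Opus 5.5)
Your argument is correct and is essentially the paper's. RTI, read as you read it ($P(X_2\le x_2\mid X_1>x_1)$ decreasing in $x_1$), makes $g(p_1,p_2)=Q_1(p_1)Q_{21}(p_1,p_2)$ coordinatewise nondecreasing, and you then compare the averages of $g$ over $[0,u_1]\times[0,u_2]$ and $[u_1,1]\times[u_2,1]$; your pivot through $g(u_1,u_2)$ is a tidier form of the paper's integration of $g(p_1,p_2)-g(r_1,r_2)$ over all pairs of points.

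Drop the detour through $\mu^{\pm}$, though. Because $Q_{21}$ conditions on $\{X_1>Q_1(p_1)\}$, $M^{\pm}(u_1,u_2)$ is not in general equal to $\mu^{\pm}(Q_1(u_1),Q_{21}(u_1,u_2))$, which is why that version seems to need no RTI. Only your quantile-form argument actually proves the bound.
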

\begin{proof}
  Since the quantile functions are positive and the integrals are taken over
positive regions, we have
\[
M^{-}(u_1,u_2) > 0, \qquad M^{+}(u_1,u_2) > 0 .
\]
Further, $Q_1(u_1)$ is non-decreasing in $u_1$ by the property of quantile functions. Under RTI, $P(X_1 \leq x_1|X_2 > x_2)$ is decreasing in $x_1$ for all $x_2$.  This means that as the conditioning value $X_1$ increases, the conditional distribution shifts to the right. Consequently, the conditional quantile function $Q_{21}(u_1, u_2)$ is non-decreasing in $u_1$. It is also non-decreasing in $u_2$ by the properties of quantile function. Hence the product $Q_1(u_1)Q_{21}(u_1, u_2)$ is non-decreasing in each argument.  Let
\[
A = \int_{0}^{u_1} \int_{0}^{u_2} Q_1(p_1)Q_{21}(p_1,p_2)\, dp_1\, dp_2, \quad \text{and} \quad B = \int_{u_1}^{1} \int_{u_2}^{1} Q_1(p_1)Q_{21}(p_1,p_2)\, dp_1\, dp_2.
\]

We need to show that $\frac{A}{u_1u_2} \le \frac{B}{(1-u_1)(1-u_2)}$.  By the monotonicity of $Q_1$ and $Q_{21}$, for
$(p_1,p_2)\in [0,u_1]\times[0,u_2]$ and
$(r_1,r_2)\in [u_1,1]\times[u_2,1]$, we have
\[
Q_1(p_1)Q_{21}(p_1,p_2) \le Q_1(r_1)Q_{21}(r_1,r_2).
\]
Thus,
\[
\int_{0}^{u_1}\!\!\int_{0}^{u_2}
\int_{u_1}^{1}\!\!\int_{u_2}^{1}
\bigl[ Q_1(p_1)Q_{21}(p_1,p_2) - Q_1(r_1)Q_{21}(r_1,r_2) \bigr]
\, dp_1 dp_2 dr_1 dr_2 \le 0.
\]
This implies
\[
 (1-u_1)(1-u_2) A \le u_1u_2B,
\]
and hence
\[
0\leq M^{-}(u_1,u_2) \le M^{+}(u_1,u_2),
\]
which completes the proof of the theorem.
\end{proof}

\begin{thm}
  Let $(X_1,X_2)$ be non-negative  with $\mu = \mathbb{E}(X_1X_2) < \infty$ and satisfies the RTI property.  Then $I(u_1,u_2)=0 \quad \text{for all} \; u_1, u_2 \in (0,1)$ if and only if $(X_1,X_2)$ is degenerate. 
\end{thm}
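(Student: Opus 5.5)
The plan is to prove both directions directly from the quantile representation \eqref{newbiv_zenga_quantile}, writing $h(p_1,p_2)=Q_1(p_1)Q_{21}(p_1,p_2)$. By the RTI argument in the proof of Theorem \ref{thm4.1}, $h$ is non-decreasing in each argument.

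\emph{Sufficiency.} Suppose $X_1=c_1$ and $X_2=c_2$ almost surely. Then, as in the proof of Theorem \ref{thm2.2}, $Q_1\equiv c_1$ and $Q_{21}\equiv c_2$. Hence $h\equiv c_1c_2$, and $M^-(u_1,u_2)=M^+(u_1,u_2)=c_1c_2$, which gives $I\equiv 0$.

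\emph{Necessity.} Assume $I(u_1,u_2)=0$ for all $(u_1,u_2)$. Then $M^-=M^+$ everywhere. Rewrite this equality as the vanishing of
\[
\int_{0}^{u_1}\!\!\int_{0}^{u_2}\int_{u_1}^{1}\!\!\int_{u_2}^{1}\bigl[h(r_1,r_2)-h(p_1,p_2)\bigr]\,dp_1\,dp_2\,dr_1\,dr_2 ,
\]
exactly the quadruple integral used in Theorem \ref{thm4.1}. By monotonicity of $h$, the integrand is non-negative. Its integral is zero, so the integrand vanishes for almost every $(p_1,p_2)\in[0,u_1]\times[0,u_2]$ and $(r_1,r_2)\in[u_1,1]\times[u_2,1]$.

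It follows that for each fixed $(u_1,u_2)$,
\[
\operatorname*{ess\,sup}_{[0,u_1]\times[0,u_2]} h \;=\; \operatorname*{ess\,inf}_{[u_1,1]\times[u_2,1]} h .
\]
Since $h$ is coordinatewise monotone, this says $h$ is a.e.\ constant, equal to some $c(u_1,u_2)$, on both rectangles. Letting $(u_1,u_2)$ vary, overlapping rectangles force a single constant $c>0$. Hence $h\equiv c$ a.e. on $(0,1)^2$, and by left-continuity of quantile functions, everywhere.

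Next, $Q_1$ and $Q_{21}$ are separately non-decreasing, positive, and their product is constant.
\begin{itemize}
\item Fix $p_1$. Then $Q_{21}(p_1,\cdot)=c/Q_1(p_1)$ is constant in $p_2$, so the conditional law of $X_2$ given $X_1>Q_1(p_1)$ is degenerate.
\item Next, $Q_{21}(p_1,p_2)$ is non-decreasing in $p_1$ under RTI, while $c/Q_1(p_1)$ is non-increasing. So both are constant in $p_1$, which makes $Q_1$ constant.
\end{itemize}
Thus $X_1=c_1$ a.s., and then $X_2$ is degenerate at $c_2=c/c_1$. This yields the degeneracy of $(X_1,X_2)$.

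The main obstacle I expect is the measure-theoretic step from ``the integrand vanishes a.e.'' to ``$h$ is globally constant''. There I would argue directly by contradiction. If $h(a)<h(b)$ for points $a\le b$ componentwise, choose $(u_1,u_2)$ strictly between them. Monotonicity then gives sets of positive measure in the lower and upper rectangles on which $h$ differs, so the quadruple integral is strictly positive, contradicting $I=0$. If $a$ and $b$ are incomparable, I would compare both to a common point.

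The second subtlety is ensuring that $Q_{21}$ being non-decreasing in $u_1$ legitimately follows from RTI as asserted in Theorem \ref{thm4.1}. I will take that as given, as the paper does.
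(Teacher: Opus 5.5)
Your proposal is correct and follows essentially the same route as the paper: the same quadruple-integral comparison using the RTI monotonicity of $Q_1Q_{21}$, the conclusion that this product is constant, and then the opposite-monotonicity argument forcing $Q_1$ and $Q_{21}$ to be constant. Your direct contradiction argument for the step ``integrand vanishes a.e.\ $\Rightarrow$ $h$ constant'' is more careful than the paper, which simply asserts that step. Note, though, that your intermediate claim that the ess sup over the lower rectangle equals the ess inf over the upper rectangle would not by itself suffice: $h(p_1,p_2)=p_1+p_2$ satisfies it for every $(u_1,u_2)$ without being constant. Your contradiction argument, or Fubini applied directly to the a.e.-vanishing integrand, does give the constancy you need.
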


\begin{proof}
First, assume that $(X_1,X_2)$ is degenerate, then $X_1=c_1, X_2=c_2. \;a.s$ and correspondingly,
\[
M^{-}(u_1,u_2)=c_1c_2,\qquad
M^{+}(u_1,u_2)=c_1c_2,
\implies
I(u_1,u_2)=0.
\]

Conversely, assume that $I(u_1,u_2)=0$ for all $u_1,u_2 \in (0,1)$. Then
\[
M^{-}(u_1,u_2)=M^{+}(u_1,u_2)\quad \forall u_1,u_2.
\]
Hence
\begin{equation}\label{rtithm}
 (1-u_1)(1-u_2)\,A = u_1 u_2\,B \quad \forall u_1,u_2,   
\end{equation}
where
\[
A=\int_0^{u_1}\int_0^{u_2} Q_1(p_1)\,Q_{21}(p_1,p_2)\,dp_2\,dp_1,
\quad
B=\int_{u_1}^{1}\int_{u_2}^{1} Q_1(p_1)\,Q_{21}(p_1,p_2)\,dp_2\,dp_1.
\]
Let
\[
g(p_1,p_2)=Q_1(p_1)\,Q_{21}(p_1,p_2).
\]

Under the RTI assumption, $Q_{21}(p_1,p_2)$ is non-decreasing in both arguments and $Q_1(p_1)$ is non-decreasing in $p_1$. Hence $g$ is non-decreasing in each coordinate, so
$g(p_1,p_2)\le g(r_1,r_2)$
whenever  $p_1\le r_1,\; p_2\le r_2$.
Fix $u_1,u_2\in(0,1)$. Then for 
$(p_1,p_2)\in [0,u_1]\times[0,u_2]$ and
$(r_1,r_2)\in [u_1,1]\times[u_2,1]$, we have
\[
g(p_1,p_2)-g(r_1,r_2)\le 0.
\]

Integrating, we obtain
\[
\int_{0}^{u_1}\int_{0}^{u_2}
\int_{u_1}^{1}\int_{u_2}^{1}
\bigl[ g(p_1,p_2) - g(r_1,r_2) \bigr]
\, dr_2\,dr_1\,dp_2\,dp_1 \le 0.
\]

Thus
\[
(1-u_1)(1-u_2)\,A - u_1 u_2\,B \le 0
\]
From \eqref{rtithm} we have, \[
(1-u_1)(1-u_2)\,A -u_1 u_2\,B = 0
\]
Since the integrand is non-positive and its integral is zero, it follows that
\[
g(p_1,p_2)-g(r_1,r_2)=0 \quad \text{a.e.}
\]
for almost all $(p_1,p_2)\in[0,u_1]\times[0,u_2]$ and 
$(r_1,r_2)\in[u_1,1]\times[u_2,1]$.

This implies that $g$ is constant almost everywhere. Hence there exists $C\ge 0$ such that
\[
Q_1(p_1)\,Q_{21}(p_1,p_2)=C \quad \text{a.e.}
\]

Fix $p_1$. Since $Q_{21}(p_1,p_2)$ is non-decreasing in $p_2$ and the product is constant in $p_2$, it follows that $Q_{21}(p_1,p_2)$ is independent of $p_2$. Hence
\[
Q_{21}(p_1,p_2)=\beta(p_1)
\]
for some function $\beta$. Thus
\[Q_1(p_1)\,\beta(p_1)=C.\]

Under RTI, $\beta(p_1)$ is non-decreasing, and $Q_1(p_1)$ is also non-decreasing. If $Q_1$ were not constant, there would exist $p_1<p_1'$ such that $Q_1(p_1)<Q_1(p_1')$, which implies
\[
\beta(p_1)=\frac{C}{Q_1(p_1)} > \frac{C}{Q_1(p_1')}=\beta(p_1'),
\]
contradicting monotonicity. Hence $Q_1(p_1)=c_1$ is constant, and then $\beta(p_1)=c_2$ is also constant. Therefore
\[
X_1=c_1,\quad X_2=c_2 \quad \text{a.s.}
\]
Hence $(X_1, X_2)$ is degenerate.
\end{proof}

\begin{thm}[Relationship between the Zenga Surface and the Lorenz surface]
Let $(X_1,X_2)$ be a non-negative absolutely continuous random vector with
$\mu = \mathbb{E}(X_1X_2) < \infty$.
Let $L(u_1,u_2)$ denote the bivariate Lorenz surface defined by
\[
L(u_1,u_2)
=
\frac{1}{\mu}
\int_{0}^{u_1}\int_{0}^{u_2}
Q_1(p_1)Q_{21}(p_1,p_2)\, dp_1\, dp_2
\] (\citet{shifna2025extending}).  Then the bivariate Zenga surface satisfies
\[
I(u_1,u_2)
=
1 -
\frac{(1-u_1)(1-u_2)\,L(u_1,u_2)}
{u_1u_2\,[1 - L(1,u_2) - L(u_1,1) + L(u_1,u_2)]},
\quad (u_1,u_2)\in(0,1)^2.
\]
\end{thm}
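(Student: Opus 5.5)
The argument is an inclusion–exclusion over the unit square applied to the integral of $g(p_1,p_2)=Q_1(p_1)Q_{21}(p_1,p_2)$. Then $M^{-}$ and $M^{+}$ are rewritten through $L$.

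First, by the definition of $L(u_1,u_2)$ we have
\[
\int_0^{u_1}\int_0^{u_2} g(p_1,p_2)\,dp_1\,dp_2=\mu L(u_1,u_2),
\]
so $M^{-}(u_1,u_2)=\mu L(u_1,u_2)/(u_1u_2)$. Next I check the normalisation $L(1,1)=1$. By the change-of-variables identity from \citet{unnikrishnan2023properties}, applied with $g(x_1,x_2)=x_1x_2$, we get
\[
\int_0^1\int_0^1 Q_1(p_1)Q_{21}(p_1,p_2)\,dp_1\,dp_2=\mathbb{E}(X_1X_2)=\mu .
\]
This is exactly the normalisation already built into the definition of the Lorenz surface in \citet{shifna2025extending}, and I will take it as given there.

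The main step is the upper integral $B=\int_{u_1}^1\int_{u_2}^1 g$. Split $[0,1]^2$ into the four rectangles determined by the point $(u_1,u_2)$. Additivity of the integral then gives
\[
\int_{u_1}^1\int_{u_2}^1 g=\int_0^1\int_0^1 g-\int_0^{u_1}\int_0^1 g-\int_0^1\int_0^{u_2} g+\int_0^{u_1}\int_0^{u_2} g .
\]
In terms of $L$ this reads
\[
B=\mu\,[\,L(1,1)-L(u_1,1)-L(1,u_2)+L(u_1,u_2)\,]=\mu\,[\,1-L(u_1,1)-L(1,u_2)+L(u_1,u_2)\,].
\]
Hence $M^{+}(u_1,u_2)=\mu[1-L(1,u_2)-L(u_1,1)+L(u_1,u_2)]/((1-u_1)(1-u_2))$.

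Finally, substitute both expressions into \eqref{newbiv_zenga_quantile}. The factor $\mu$ cancels, and the factors $u_1u_2$ and $(1-u_1)(1-u_2)$ rearrange to give the stated formula. The denominator is strictly positive on $(0,1)^2$ because $g>0$ and $B>0$ there, so the ratio is well defined.

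The only delicate point is justifying $L(1,1)=1$, which is the identity $\int g=\mu$. The rest is routine additivity of the integral of an integrable non-negative function; integrability follows from $\mu<\infty$.
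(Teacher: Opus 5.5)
Your argument is the paper's proof made explicit. The paper simply writes $M^{-}=\mu L(u_1,u_2)/(u_1u_2)$ and $M^{+}=\mu[1-L(1,u_2)-L(u_1,1)+L(u_1,u_2)]/((1-u_1)(1-u_2))$ and substitutes, tacitly using your four-rectangle decomposition and the normalisation $L(1,1)=1$. On the step you flagged: the change-of-variables identity turns $\int_0^1\!\int_0^1 Q_1Q_{21}$ into $E(X_1X_2)$ when $F_1(dx_1)F_{21}(dx_2)$ is the joint law, i.e.\ when $Q_{21}$ conditions on $X_1=Q_1(u_1)$. With the $Q_{21}$ of Section~2, which conditions on $X_1>Q_1(u_1)$, it instead gives $E[X_1\,m(X_1)]$ with $m(x)=E(X_2\mid X_1>x)$; for the Pareto example of Section~2 this equals $\alpha\mu/(\alpha-1)\neq\mu$, so $L(1,1)=1$ is an assumption your proof shares with the paper's, and without it the $1$ in the denominator should read $L(1,1)$.
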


\begin{proof} We have $L(u_1,u_2)
=
\frac{1}{\mu}
\int_{0}^{u_1}\int_{0}^{u_2}
Q_1(p_1)Q_{21}(p_1,p_2)\, dp_1\, dp_2$.  Then
\[
M^{-}(u_1,u_2)
=
\frac{\mu\, L(u_1,u_2)}{u_1u_2} \quad \text{and} \quad M^{+}(u_1,u_2)
=
\frac{\mu\,[1 - L(1,u_2) - L(u_1,1) + L(u_1,u_2)]}{(1-u_1)(1-u_2)}.
\]
Hence,
\[
I(u_1,u_2)
=
1 -
\frac{(1-u_1)(1-u_2)\,L(u_1,u_2)}
{u_1u_2\,[1 - L(1,u_2) - L(u_1,1) + L(u_1,u_2)]}.
\] 
\end{proof}
In the univariate case, a synthetic index was associated with the Zenga curve and we extend this to the bivariate case as well.
\begin{defi}
Let $I(u_1,u_2)$ be the new bivariate Zenga surface defined on $[0,1]^2$. 
The associated bivariate synthetic index is defined as
\[
I = \int_0^1 \int_0^1 I(u_1,u_2)\, du_1\, du_2.
\]
\end{defi}

\section{Vector-valued bivariate  Zenga curve}
Let $(X_1,X_2)$ be a non-negative absolutely continuous random vector with finite positive means.
We represent the joint distribution through the bivariate quantile functions
\[
(Q_{12}(u_1, u_2), Q_2(u_2))
\quad \text{or equivalently} \quad
(Q_1(u_1), Q_{21}(u_1, u_2)).
\]
Now, define
\[
J_{12}(u_1,u_2) = \int_0^{u_1} Q_{12}(p, u_2)\,dp,
\qquad
J_{21}(u_1,u_2) = \int_0^{u_2} Q_{21}(u_1, p)\,dp.
\]
Also, let the conditional means in terms of the quantile functions by,
\[
\mu_{12}(u_2) = \int_0^1 Q_{12}(p, u_2)\,dp = E(X_1 \mid X_2 ),
\]
and
\[
\mu_{21}(u_1) = \int_0^1 Q_{21}(u_1, p)\,dp = E(X_2 \mid X_1 ).
\]
Then the vector-valued bivariate Zenga curve (VBZC) is defined as
\[
\mathbf{I}(u_1,u_2) = \bigl(I_{12}(u_1,u_2), I_{21}(u_1,u_2)\bigr),
\]
where
\[
I_{12}(u_1,u_2)
=
1 - \frac{M^-_{12}(u_1,u_2)}{M^+_{12}(u_1,u_2)},
\quad
I_{21}(u_1,u_2)
=
1 - \frac{M^-_{21}(u_1,u_2)}{M^+_{21}(u_1,u_2)},
\]
with $M^-_{12}(u_1,u_2)$ and $M^+_{12}(u_1,u_2)$ denote respectively by
\[
M^-_{12}(u_1,u_2) = \frac{1}{u_1} \int_{0}^{u_1}{Q_{12}(p, u_2)dp} = \frac{J_{12}(u_1,u_2)}{u_1} ,
\]
\[
M^+_{12}(u_1,u_2) = \frac{1}{1-u_1}
\int_{u_1}^1 Q_{12}(p, u_2)\,dp = \frac{\mu_{12}(u_2) - J_{12}(u_1, u_2)}{1 - u_1},
\]
and analogous definitions for $M^-_{21}$ and $M^+_{21}$.

\begin{ex}
    Consider the bivariate Pareto distribution with joint survival function given in Example \ref{ex3.1},
\[
\bar F(x_1,x_2) = (x_1+x_2-1)^{-c}, \qquad 0<x_1, x_2<1; x_1+x_2\geq 1; c>0.
\]
with joint distribution function
\[
F(x_1,x_2)
=
(x_1+x_2-1)^{-c}-x_1^{-c}-x_2^{-c}+1.
\]
such that the conditional quantile functions are given by
\[
Q_{12}(u_1, u_2)
=
1+(1-u_2)^{-1/c}\Bigl[(1-u_1)^{-1/c}-1\Bigr],
\]
\[
Q_{21}(u_1, u_2)
=
1+(1-u_1)^{-1/c}\Bigl[(1-u_2)^{-1/c}-1\Bigr].
\]
Then,
\begin{equation}\label{5.1}
I_{12}(u_1,u_2)  =1-\frac{(1-u_1)\left(\left(1 - u_{1}\right)^{\frac{1}{c}} \left(\left(c - 1\right) \left(1 - u_{2}\right)^{\frac{1}{c}} u_{1} + \left(1 - c\right) u_{1} + c\right) + cu_{1} - c\right)}{u_1\left(\left(\left(c - 1\right) \left(1 - u_{2}\right)^{\frac{1}{c}} - c + 1\right) \left(1 - u_{1}\right)^{\frac{1}{c}} + c\right) \left(u_{1} - 1\right)},
\end{equation}
and
\begin{equation}\label{5.2}
    I_{21}(u_1,u_2)=1-\frac{(1-u_2)\left(\left(1 - u_{2}\right)^{\frac{1}{c}} \left(\left(c - 1\right) \left(1 - u_{1}\right)^{\frac{1}{c}} u_{2} + \left(1 - c\right) u_{2} + c\right) + cu_{2} - c\right)}{u_2\left(\left(\left(c - 1\right) \left(1 - u_{1}\right)^{\frac{1}{c}} - c + 1\right) \left(1 - u_{2}\right)^{\frac{1}{c}} + c\right) \left(u_{2} - 1\right)}.
\end{equation}
Then VBZC is given by,
\begin{equation*}
    \mathbf{I}(u_1, u_2) = \left(I_{12}(u_1, u_2), I_{21}(u_1, u_2)\right).
\end{equation*}   
The plots \eqref{5.1} and \eqref{5.2} for $c=2$ is shown in Figure \ref{i12pareto} and \ref{i21pareto}.
\begin{figure}
    \centering
    \includegraphics[width=0.5\linewidth]{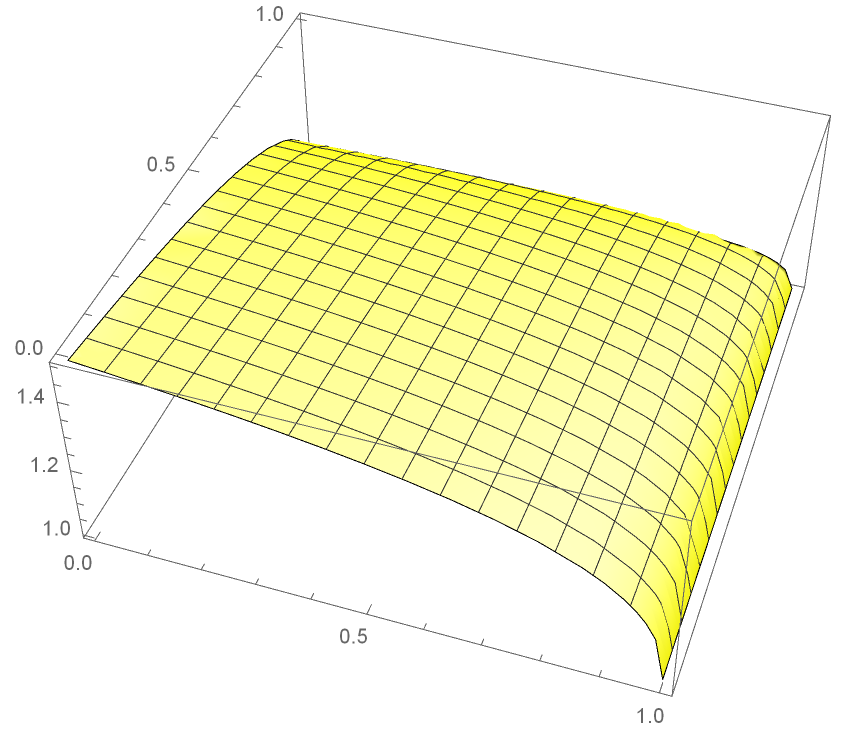}
    \caption{Graphical representation for $I_{12}(u_1,u_2)$ for Pareto distribution}
    \label{i12pareto}
\end{figure}
\begin{figure}
    \centering
    \includegraphics[width=0.5\linewidth]{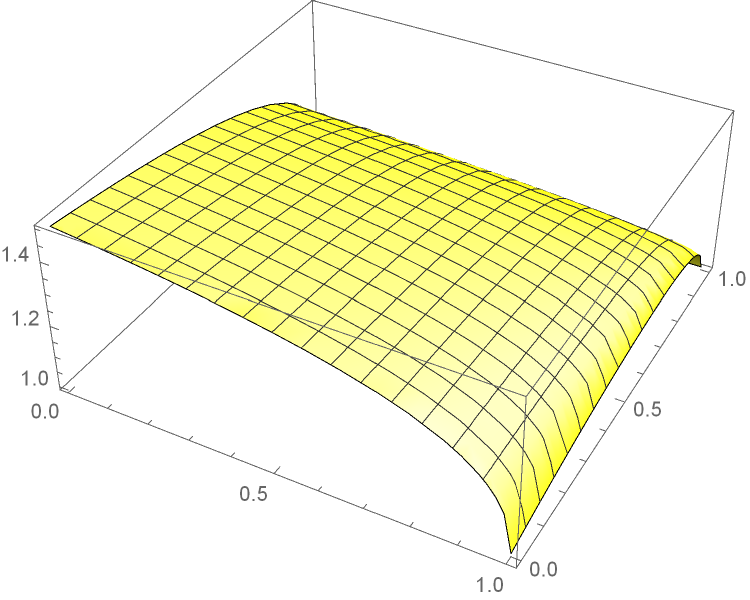}
    \caption{Graphical representation for $I_{21}(u_1,u_2)$ for Pareto distribution}
    \label{i21pareto}
\end{figure}
\end{ex}
\subsection{Properties of VBZC}
\begin{thm}
The bivariate distribution of $(X_1,X_2)$ is uniquely determined by
$\mathbf{I}(u_1,u_2)$.
\end{thm}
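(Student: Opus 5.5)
We plan to reduce the statement to the univariate uniqueness result of \citet{zenga2007inequality} (equivalently the quantile version in \citet{nair2012some}), applied separately along each conditional direction. Fix $u_2\in(0,1)$ and regard $p\mapsto Q_{12}(p,u_2)$ as the quantile function of a univariate non-negative random variable with mean $\mu_{12}(u_2)$. Then $I_{12}(\cdot,u_2)$ is exactly the univariate curve $I(u)=1-\frac{(1-u)\int_0^u Q(p)\,dp}{u\int_u^1 Q(p)\,dp}$ built from this quantile function. So it suffices to show that this univariate curve determines $Q$ up to the normalisation, and then to identify the missing constants.

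The first step is to invert the univariate relation explicitly. Write $J(u)=J_{12}(u,u_2)$ and $\mu=\mu_{12}(u_2)$. The definition gives
\[
1-I_{12}(u,u_2)=\frac{(1-u)J(u)}{u(\mu-J(u))}.
\]
Setting $h(u)=\frac{u(1-I_{12}(u,u_2))}{1-u}$, we get $J=h(\mu-J)$, that is, $J(u)=\mu\,\frac{h(u)}{1+h(u)}$. Since $Q_{12}(u,u_2)=\partial J_{12}(u,u_2)/\partial u$ almost everywhere, this yields
\[
Q_{12}(u_1,u_2)=\mu_{12}(u_2)\,\frac{\partial}{\partial u_1}\!\left[\frac{h(u_1)}{1+h(u_1)}\right].
\]
Hence $I_{12}$ determines $Q_{12}$ up to the factor $\mu_{12}(u_2)$. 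In the same way, $I_{21}$ determines $Q_{21}(u_1,\cdot)$ up to the factor $\mu_{21}(u_1)$. Because $Q_{12}$ is right-continuous and non-decreasing in $u_1$, we take right derivatives and right-continuous versions, so the a.e.\ identity pins down the quantile function everywhere.

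The main obstacle is recovering the scale factors $\mu_{12}(u_2)$ and $\mu_{21}(u_1)$. A ratio measure such as $I_{12}$ is scale invariant, so $\mathbf{I}$ alone cannot distinguish $(X_1,X_2)$ from $(aX_1,bX_2)$. The statement must therefore be read as uniqueness up to such scale changes, or with the conditional means $\mu_{12},\mu_{21}$ regarded as known. This mirrors \citet{zenga2007inequality}, whose uniqueness is up to scale. I would try to exploit consistency between the two representations $(Q_{12},Q_2)$ and $(Q_1,Q_{21})$ of the same law, which couples $\mu_{12}(\cdot)$ and $\mu_{21}(\cdot)$. This can at best reduce the unknown to a pair of constants, not remove it. I would therefore state the result with the two marginal scales, for example $E(X_1)$ and $E(X_2)$, fixed. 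Under that normalisation I would check that the conditional means are forced. From the recovered normalised shapes $\tilde Q_{12}=Q_{12}/\mu_{12}$ and $\tilde Q_{21}=Q_{21}/\mu_{21}$, consistency of the joint law gives functional equations that fix $\mu_{12}$ and $\mu_{21}$ once $E(X_1)$ and $E(X_2)$ are known.

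Finally, the bivariate quantile vector $(Q_1,Q_{21})$ determines $F$, by the representation of \citet{unnikrishnan2023properties}. The recovered $Q_{21}$ together with the marginal $Q_1$ therefore yields the joint distribution. Here $Q_1$ is obtained by integrating $Q_{12}$ against the distribution of $X_2$, or equivalently by the symmetric construction. This completes the uniqueness argument, up to the scale caveat noted above.
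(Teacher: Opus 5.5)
\paragraph{Review.} Your inversion, $J_{12}=\mu_{12}(u_2)\,h/(1+h)=\mu_{12}(u_2)\,u_1(1-I_{12})/(1-u_1I_{12})$ followed by differentiation in $u_1$ (and likewise for $I_{21}$), is exactly the paper's proof. You are right that the statement cannot hold literally, because the paper's own scale-invariance theorem gives $\mathbf{I}^Y=\mathbf{I}^X$ for $Y=(a_1X_1,a_2X_2)$. Your second reading, ``with $\mu_{12},\mu_{21}$ regarded as known'', is what the paper's proof tacitly assumes without saying so.

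The genuine gap is the repair you then commit to. You assert that consistency of the joint law leaves only two constants, so that fixing $E(X_1)$ and $E(X_2)$ forces $\mu_{12}(\cdot)$ and $\mu_{21}(\cdot)$. This is false. Take Gumbel's bivariate exponential $\bar F_\theta(x_1,x_2)=\exp(-x_1-x_2-\theta x_1x_2)$ with $0\le\theta\le 1$. Every member has standard exponential marginals, and $P(X_1>x_1\mid X_2>Q_2(u_2))=\exp\{-x_1(1+\theta Q_2(u_2))\}$. Hence $Q_{12}(u_1,u_2)=-\log(1-u_1)/(1-\theta\log(1-u_2))$, a pure rescaling of the marginal quantile with $\mu_{12}(u_2)=1/(1-\theta\log(1-u_2))$; the same holds symmetrically for $Q_{21}$. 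Since $I_{12}(\cdot,u_2)$ depends only on $Q_{12}(\cdot,u_2)/\mu_{12}(u_2)$, every $\theta$ gives
\[ I_{12}(u_1,u_2)=1-\frac{u_1+(1-u_1)\log(1-u_1)}{u_1\,(1-\log(1-u_1))}, \]
and the same expression in $u_2$ for $I_{21}$. So $\mathbf{I}$, both marginals, and in particular $E(X_1)=E(X_2)=1$ agree across the whole family. Yet the joint laws differ, for example $\bar F_\theta(1,1)=e^{-2-\theta}$, and since the marginals coincide they are not rescalings of one another. No functional-equation argument can therefore recover the conditional mean functions, and the ``uniqueness up to $(a_1X_1,a_2X_2)$'' version you aim for is false.

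\paragraph{What can be proved.} The correct statement is that $\mathbf{I}$ together with $\mu_{12}(\cdot)$ and $\mu_{21}(\cdot)$ determines $F$, which is all the paper's argument delivers. To finish that version you must recover the marginals correctly. With the tail conditioning used here, $Q_1(u_1)=\lim_{u_2\to0^+}Q_{12}(u_1,u_2)$ and $Q_2(u_2)=\lim_{u_1\to0^+}Q_{21}(u_1,u_2)$ at continuity points; then $(Q_1,Q_{21})$ gives $F$. Your suggestion to obtain $Q_1$ by integrating $Q_{12}$ against the law of $X_2$ does not work, for two reasons. Quantile functions do not mix linearly (distribution functions do). Also, $Q_{12}(\cdot,u_2)$ conditions on the tail event $X_2>Q_2(u_2)$, not on $X_2=x_2$.
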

\begin{proof}
    We have 
    $J_{12}(u_1,u_2) = \int_0^{u_1} Q_{12}(p, u_2)\,dp,$ and hence $Q_{12}(u_1,u_2)=\frac{\partial J_{12}(u_1,u_2)}{\partial u_1}$. Using
$M^-_{12} = J_{12}/u_1$ and
$M^+_{12} = (\mu_{12}-J_{12})/(1-u_1)$
and substituting into the definition of $I_{12}$ yields \[
I_{12}(u_1,u_2)
=
\frac{\mu_{12}(u_2)u_1 - J_{12}(u_1,u_2)}
{\mu_{12}(u_2)u_1 - u_1J_{12}(u_1,u_2)}.
\]
Thus \[J_{12}(u_1,u_2)=\frac{\mu_{12}(u_2)\,u_1\,[1-I_{12}(u_1,u_2)]}
{1-u_1\,I_{12}(u_1,u_2)}.
\]
Differentiation with respect to $u_1$ yields $Q_{12}(u_1, u_2)$.\
Similarly\[J_{21}(u_1,u_2)=\frac{\mu_{21}(u_1)\,u_2\,[1-I_{21}(u_1,u_2)]}
{1-u_2\,I_{21}(u_1,u_2)},
\]
$Q_{21}$ is obtained from $I_{21}$.
Since $ \left(Q_{12}(u_1, u_2), Q_{21}(u_1, u_2)\right) $ uniquely determine the joint distribution, the result follows.
\end{proof}
\begin{thm}
For all $u_1,u_2\in(0,1)$,
\[
0 \le I_{12}(u_1,u_2) \le 1,
\qquad
0 \le I_{21}(u_1,u_2) \le 1 .
\]
Moreover,
\begin{enumerate}[(i)]
\item $I_{12}(u_1,u_2)=0$ for some $(u_1,u_2)$ if and only if the conditional
distribution of $X_1$ given $X_2>Q_2(u_2)$ is degenerate.
\item $I_{12}(u_1,u_2)=1$ for some $(u_1,u_2)$ if and only if
$M_{12}^-(u_1,u_2)=0$.
\end{enumerate}
Analogous statements hold for $I_{21}$.
\end{thm}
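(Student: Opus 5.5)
The argument reduces to a univariate comparison along a single section of the quantile function. Fix $u_2\in(0,1)$ and regard $p\mapsto Q_{12}(p,u_2)$ as a function on $(0,1)$. By construction (the conditional quantile approach recalled in Section 2, with the roles of the coordinates interchanged), it is the quantile function of the conditional distribution of $X_1$ given $X_2>Q_2(u_2)$. Being a quantile function, it is non-decreasing in $p$, and it is non-negative because $X_1\ge 0$. Everything then follows from the univariate argument of \citet{nair2012some}, applied to this conditional distribution, with $\mu_{12}(u_2)=\int_0^1 Q_{12}(p,u_2)\,dp$ playing the role of its mean.

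\textbf{Bounds.} Non-negativity of $Q_{12}$ gives $M^-_{12}\ge 0$, and $M^+_{12}>0$ unless $Q_{12}(\cdot,u_2)$ vanishes on $(u_1,1)$. That case would force the conditional law to be degenerate at $0$, which is excluded by the positive means (or handled by convention). Hence $I_{12}\le 1$.

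For $I_{12}\ge 0$ I would copy the device used in Theorem \ref{thm4.1}, in one variable. For $p\in(0,u_1)$ and $r\in(u_1,1)$, monotonicity gives $Q_{12}(p,u_2)\le Q_{12}(r,u_2)$. Integrating over $(0,u_1)\times(u_1,1)$ yields
\[
(1-u_1)J_{12}(u_1,u_2)\le u_1\bigl(\mu_{12}(u_2)-J_{12}(u_1,u_2)\bigr),
\]
that is, $M^-_{12}\le M^+_{12}$. No RTI assumption is needed here, since only monotonicity in the first argument is used.

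\textbf{Statement (i).} If the conditional law is degenerate at some $c>0$, then $Q_{12}(\cdot,u_2)\equiv c$, so $M^-_{12}=M^+_{12}=c$ and $I_{12}=0$. Conversely, $I_{12}(u_1,u_2)=0$ means the double integral of the non-positive integrand $Q_{12}(p,u_2)-Q_{12}(r,u_2)$ vanishes. Hence $Q_{12}(p,u_2)=Q_{12}(r,u_2)$ for almost every $p<u_1<r$, so $Q_{12}(\cdot,u_2)$ is a.e.\ constant on $(0,1)$ by monotonicity. A constant quantile function means the conditional distribution of $X_1$ given $X_2>Q_2(u_2)$ is degenerate.

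The main subtlety in (i) is the phrase ``for some $(u_1,u_2)$''. Equality at a single $u_1$ already forces the value on $(0,u_1)$ to equal the value on $(u_1,1)$ a.e. Monotonicity then pins the function: it is at most the left-hand value below $u_1$ and at least the right-hand value above, so the two are equal and the function is constant. I would make this step explicit, with left and right limits at $u_1$.

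\textbf{Statement (ii).} Since $M^+_{12}>0$, $I_{12}=1$ holds if and only if $M^-_{12}/M^+_{12}=0$, that is, if and only if $M^-_{12}(u_1,u_2)=0$. This is immediate.

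The statements for $I_{21}$ follow by interchanging the roles of the indices, using $Q_{21}(u_1,\cdot)$ and $\mu_{21}(u_1)$.
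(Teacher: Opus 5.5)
Your proof is correct, and its skeleton is the paper's. Both reduce everything to the fact that $p\mapsto Q_{12}(p,u_2)$ is a non-negative, non-decreasing quantile function of $X_1$ given $X_2>Q_2(u_2)$, with mean $\mu_{12}(u_2)$.

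The routes differ in two places. For $M^-_{12}\le M^+_{12}$, the paper compares each partial mean with the value at the cut point. This gives the sandwich $M^-_{12}(u_1,u_2)\le Q_{12}(u_1,u_2)\le M^+_{12}(u_1,u_2)$. You instead integrate $Q_{12}(r,u_2)-Q_{12}(p,u_2)\ge 0$ over $(0,u_1)\times(u_1,1)$, as in Theorem~\ref{thm4.1}. The sandwich is a line shorter and yields an intermediate bound; your version has the equality analysis built in.

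The substantive difference is in (i). The paper rewrites $I_{12}=0$ as $J_{12}(u_1,u_2)=u_1\mu_{12}(u_2)$ and differentiates in $u_1$. That step is legitimate only if the identity holds for all $u_1$ in an interval, not at the single point the hypothesis ``for some $(u_1,u_2)$'' supplies. The paper also leaves the converse implicit. Your argument works at a single $u_1$: a non-negative integrand with zero integral vanishes a.e., and a monotone function a.e.\ equal to one constant on both sides of $u_1$ is constant. So it proves the pointwise statement as written and covers both directions.

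One small correction: positivity of the unconditional means does not rule out $X_1=0$ on $\{X_2>Q_2(u_2)\}$. The real reason $M^+_{12}>0$ is absolute continuity: that event has probability $1-u_2>0$, and $X_1$ has no atom at $0$. The same reasoning gives $Q_{12}(p,u_2)>0$ for $p>0$, hence $M^-_{12}>0$. So under the paper's hypotheses, (ii) holds only vacuously. The paper's own proof ignores the case $M^+_{12}=0$ altogether, so your treatment is already more careful than the original.
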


\begin{proof}
Since $X_1\ge 0$, the conditional quantile function $Q_{12}(p, u_2)\ge 0$ for all $p\in(0,1)$.  Hence $J_{12}(u_1,u_2)\ge 0$ and $\mu_{12}(u_2)\ge J_{12}(u_1,u_2)$, implying $M_{12}^-(u_1,u_2)\ge 0$ and $M_{12}^+(u_1,u_2)\ge 0$.

The function $Q_{12}(p, u_2)$ is non-decreasing in $p$.
Indeed, letting
\[
F_{X_1\mid X_2>t_2}(x_1)
=
P(X_1\le x_1\mid X_2>t_2),
\qquad t_2=Q_2(u_2),
\]
we have
\[
Q_{12}(p, u_2)=F_{X_1\mid X_2>t_2}^{-1}(p),
\]
the quantile function of a distribution function, which is non-decreasing in
$p$.
Therefore, for $p\le u_1$,
$Q_{12}(p, u_2)\le Q_{12}(u_1, u_2)$, and
\[
J_{12}(u_1,u
_2)
\le
u_1 Q_{12}(u_1, u_2).
\]
Similarly, for $p\ge u_1$,
$Q_{12}(p, u_2)\ge Q_{12}(u_1, u_2)$, and
\[
\mu_{12}(u_2)-J_{12}(u_1,u_2)
=
\int_{u_1}^1 Q_{12}(p, u_2)\,dp
\ge
(1-u_1)Q_{12}(u_1, u_2).
\]
Hence
\[
M_{12}^-(u_1,u_2)
\le
Q_{12}(u_1, u_2)
\le
M_{12}^+(u_1,u_2),
\]
which implies
\[
0\le \frac{M_{12}^-(u_1,u_2)}{M_{12}^+(u_1,u_2)} \le 1
\quad\text{and}\quad
0\le I_{12}(u_1,u_2)\le 1.
\]

\begin{enumerate}[(i)]
    \item $I_{12}(u_1,u_2)=0$ if and only if
$M_{12}^-(u_1,u_2)=M_{12}^+(u_1,u_2)$, which is equivalent to
$J_{12}(u_1,u_2)=u_1\mu_{12}(u_2)$.
Differentiating with respect to $u_1$ yields
$Q_{12}(u_1, u_2)=\mu_{12}(u_2)$, showing that the conditional distribution
of $X_1$ given $X_2>Q_2(u_2)$ is degenerate.
\item $I_{12}(u_1,u_2)=1$ if and only if $M_{12}^-(u_1,u_2)=0$, that is,
$J_{12}(u_1,u_2)=0$.
Since $Q_{12}(p, u_2)\ge 0$, this implies
$Q_{12}(p, u_2)=0$ for almost all $p\in[0,u_1]$, meaning that the lower
$u_1$ fraction of the conditional distribution has zero share.
\end{enumerate}

The proof for $I_{21}$ follows by interchanging the roles of the indices.
\end{proof}
\begin{thm}
Let $Y=(Y_1,Y_2)$ be defined by  $Y_1=a_1X_1$ and $Y_2=a_2X_2$ with $a_1,a_2>0$. Then, for all
$u_1,u_2\in(0,1)$,
\[
I_{12}^Y(u_1,u_2)=I_{12}^X(u_1,u_2),
\qquad
I_{21}^Y(u_1,u_2)=I_{21}^X(u_1,u_2).
\]
\end{thm}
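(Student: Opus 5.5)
The plan is to show that both conditional quantile functions scale by the corresponding constant, $Q^Y_{12}(p,u_2)=a_1Q^X_{12}(p,u_2)$ and $Q^Y_{21}(u_1,p)=a_2Q^X_{21}(u_1,p)$. Once this holds, every quantity entering $I_{12}$ and $I_{21}$ is multiplied by the same positive constant, and the constant cancels in the ratios $M^-/M^+$.

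\textbf{Step 1: scaling of the conditioning threshold.} Since $Y_2=a_2X_2$ with $a_2>0$, the marginal distribution functions satisfy $F^Y_2(y)=F^X_2(y/a_2)$. Hence $Q^Y_2(u_2)=a_2Q^X_2(u_2)$, and the events coincide:
\[
\{Y_2>Q^Y_2(u_2)\}=\{X_2>Q^X_2(u_2)\}.
\]

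\textbf{Step 2: scaling of the conditional quantile.} Using Step 1,
\[
P\bigl(Y_1\le y\mid Y_2>Q^Y_2(u_2)\bigr)=P\bigl(X_1\le y/a_1\mid X_2>Q^X_2(u_2)\bigr).
\]
Taking the generalized inverse in $y$ gives $Q^Y_{12}(p,u_2)=a_1Q^X_{12}(p,u_2)$ for all $p,u_2$. The only point to check is that the infimum $\inf\{y: G(y/a_1)\ge p\}=a_1\inf\{x: G(x)\ge p\}$, which holds because $y\mapsto y/a_1$ is an increasing bijection of $[0,\infty)$. The same argument with the indices interchanged gives $Q^Y_{21}=a_2Q^X_{21}$. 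I expect this step to be the only one needing care, since the conditioning event must be shown to be invariant under the scaling, and that is exactly what Step 1 provides.

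\textbf{Step 3: cancellation.} By linearity of the integral, $J^Y_{12}=a_1J^X_{12}$ and $\mu^Y_{12}=a_1\mu^X_{12}$. Therefore $M^{-,Y}_{12}=a_1M^{-,X}_{12}$ and $M^{+,Y}_{12}=a_1M^{+,X}_{12}$, so
\[
I^Y_{12}=1-\frac{a_1M^{-,X}_{12}}{a_1M^{+,X}_{12}}=I^X_{12}.
\]
The identity $I^Y_{21}=I^X_{21}$ follows in the same way with the factor $a_2$. One may also cite the closed form
\[
I_{12}=\frac{\mu_{12}u_1-J_{12}}{\mu_{12}u_1-u_1J_{12}}
\]
from the uniqueness theorem: the numerator is linear in the pair $(\mu_{12},J_{12})$, so its factor $a_1$ cancels in the same way.
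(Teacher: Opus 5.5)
Your proof is correct and follows the same route as the paper: establish $Q^Y_{12}=a_1Q^X_{12}$ (and $Q^Y_{21}=a_2Q^X_{21}$), deduce that $J_{12}$, $\mu_{12}$, $M^-_{12}$ and $M^+_{12}$ all scale by $a_1$, and cancel the constant in the ratio. Your Steps 1--2 add a justification that the paper only asserts: the conditioning event $\{X_2>Q_2(u_2)\}$ is unchanged by the scaling, and the generalized inverse scales by $a_1$.
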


\begin{proof}
Under the scaling transformation,
\[
Q_{12}^Y(u_1, u_2)=a_1 Q_{12}^X(u_1, u_2),
\qquad
\mu_{12}^Y(u_2)=a_1 \mu_{12}^X(u_2),
\]
and hence
\[
J_{12}^Y(u_1,u_2)
=
\int_0^{u_1} Q_{12}^Y(p\mid u_2)\,dp
=
a_1 J_{12}^X(u_1,u_2).
\]
Therefore,
\[
M_{12}^{Y,-}(u_1,u_2)
=
\frac{1}{u_1}J_{12}^Y(u_1,u_2)
=
a_1 M_{12}^{X,-}(u_1,u_2),
\]
and
\[
M_{12}^{Y,+}(u_1,u_2)
=
\frac{1}{1-u_1}\bigl(\mu_{12}^Y(u_2)-J_{12}^Y(u_1,u_2)\bigr)
=
a_1 M_{12}^{X,+}(u_1,u_2).
\]
Substituting into the definition
\[
I_{12}(u_1,u_2)
=
1-\frac{M_{12}^-(u_1,u_2)}{M_{12}^+(u_1,u_2)}
\]
yields $I_{12}^Y(u_1,u_2)=I_{12}^X(u_1,u_2)$. The proof for $I_{21}$ follows
analogously by interchanging the roles of the indices.
\end{proof}
\subsection{Nonparametric Estimators of the VBZC}

Let $(X_{1i},X_{2i})_{i=1}^n$ be an i.i.d. samples from an absolutely continuous 
bivariate distribution with finite first moments.
Fix $u_2 \in (0,1)$. Let $\hat F_2$ denote the empirical distribution function of $X_2$, and define the empirical marginal quantile
$\hat Q_2(u_2)=\inf \{ x : \hat F_2(x) \ge u_2 \}.$
Define the upper-tail subsample $S_{u_2}
=\{ i : X_{2i} > \hat Q_2(u_2) \}$ and let $n_{u_2}$ be the size of the subsample.
Let $X_{1(1)}^{u_2} \le \cdots \le X_{1(n_{u_2})}^{u_2}$ be the order statistics of $\{X_{1i} : i \in S_{u_2}\}$.
The empirical conditional quantile function of $X_1$ given 
$X_2 > \hat Q_2(u_2)$ is
$\hat Q_{12}(p , u_2)
=
X_{1(\lceil n_{u_2} p \rceil)}^{u_2},
\qquad p \in (0,1).$
The empirical lower partial mean is defined by
$\hat M^-_{12}(u_1,u_2)
=
\frac{1}{u_1}
\int_0^{u_1}
\hat Q_{12}(p , u_2)\, dp.$
Substituting the empirical quantile function, we obtain
$\hat M^-_{12}(u_1,u_2)
=
\frac{1}{u_1}
\int_0^{u_1}
X_{1(\lceil n_{u_2} p \rceil)}^{u_2}
\, dp.$
Since the empirical quantile function is stepwise constant, 
let $k = \lfloor n_{u_2} u_1 \rfloor$. Then
\[
\hat M^-_{12}(u_1,u_2)
=
\frac{1}{u_1}
\left[
\frac{1}{n_{u_2}}
\sum_{i=1}^{k}
X_{1(i)}^{u_2}
+
\left(
u_1 - \frac{k}{n_{u_2}}
\right)
X_{1(k+1)}^{u_2}
\right].
\]
The empirical conditional mean of $(X_1 \mid X_2 > Q_2)$ is estimated by the sample average of $X_1$ over the corresponding conditional subsample, $\hat \mu_{12}(u_2)=\frac{1}{n_{u_2}}\sum_{i \in S_{u_2}} X_{1i}.$
The empirical upper partial mean is given by
\[
\hat M^+_{12}(u_1,u_2)
=
\frac{
\hat \mu_{12}(u_2)
-
u_1 \hat M^-_{12}(u_1,u_2)
}
{1-u_1}.
\]
The nonparametric estimator of the vector Zenga component is therefore
\[
\hat I_{12}(u_1,u_2)
=
1
-
\frac{
\hat M^-_{12}(u_1,u_2)
}{
\hat M^+_{12}(u_1,u_2)
}.
\]
The estimator $\hat I_{21}(u_1,u_2)$ is obtained analogously by 
interchanging the roles of $X_1$ and $X_2$.
\begin{thm}
Let $\{(X_{1i},X_{2i})\}_{i=1}^n$ be an i.i.d. sample from a bivariate distribution with absolutely continuous distribution $F$. 
Assume that $E|X_1|<\infty$ and $M^+_{12}(u_1,u_2)\neq 0$ for $(u_1,u_2)\in(0,1)^2$. 
Then the nonparametric estimator $\hat I_{12}(u_1,u_2)$ converges almost surely to the  bivariate Zenga curve $I_{12}(u_1,u_2)$, that is,
\[
\hat I_{12}(u_1,u_2) \xrightarrow{a.s.} I_{12}(u_1,u_2),
\qquad n\to\infty.
\]
An analogous result holds for $\hat I_{21}(u_1,u_2)$.
\end{thm}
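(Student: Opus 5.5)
The proof reduces to strong consistency of two ingredients: the estimated conditional mean $\hat\mu_{12}(u_2)$ and the estimated lower partial mean $\hat M^-_{12}(u_1,u_2)$. The target is a continuous function of these, namely
\[
I_{12}(u_1,u_2)=1-\frac{(1-u_1)M^-_{12}(u_1,u_2)}{\mu_{12}(u_2)-u_1M^-_{12}(u_1,u_2)},
\]
and the denominator equals $(1-u_1)M^+_{12}(u_1,u_2)\neq 0$ by assumption. So once both ingredients converge almost surely, the continuous mapping theorem applied at a point where the map is continuous gives $\hat I_{12}\to I_{12}$ a.s.

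First I will handle the random threshold. Since $F$ is absolutely continuous, $Q_2$ is continuous and $F_2$ is strictly increasing near $t_2=Q_2(u_2)$. Glivenko--Cantelli then gives $\hat Q_2(u_2)\to t_2$ a.s. and $n_{u_2}/n\to 1-u_2>0$ a.s.

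Next, the conditional mean. I will write
\[
\hat\mu_{12}(u_2)=\frac{n^{-1}\sum_i X_{1i}\mathbf 1\{X_{2i}>\hat Q_2(u_2)\}}{n_{u_2}/n}.
\]
Replacing $\hat Q_2(u_2)$ by $t_2$ costs at most $n^{-1}\sum_i |X_{1i}|\mathbf 1\{X_{2i}\in(t_2-\delta,t_2+\delta)\}$ once $|\hat Q_2-t_2|<\delta$. By the SLLN this is eventually bounded by $E|X_1|\mathbf 1\{X_2\in(t_2-\delta,t_2+\delta)\}$, which tends to $0$ as $\delta\downarrow 0$ by dominated convergence, using $E|X_1|<\infty$ and $P(X_2=t_2)=0$. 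A second SLLN application then gives $\hat\mu_{12}(u_2)\to E(X_1\mid X_2>t_2)=\mu_{12}(u_2)$ a.s.

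For the lower partial mean, I note that $u_1\hat M^-_{12}$ is exactly $\int_0^{u_1}\hat G^{-1}(p)\,dp$, where $\hat G$ is the empirical distribution of $X_1$ over the subsample $S_{u_2}$. By the same sandwich argument with indicators $\mathbf 1\{X_{1i}\le x\}$, $\hat G$ converges uniformly a.s. to $G(x)=P(X_1\le x\mid X_2>t_2)$; this is a Glivenko--Cantelli statement for the class $\{\mathbf 1\{x_1\le x, x_2>s\}\}$, which is VC. Hence $\hat G^{-1}(p)\to Q_{12}(p,u_2)$ at every continuity point $p$, i.e. for almost every $p$. To pass the limit inside $\int_0^{u_1}$, I will use uniform integrability. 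For any $K$, the truncated variables $X_1\wedge K$ satisfy $\int_0^{u_1}(\hat G^{-1}\wedge K)\to\int_0^{u_1}(Q_{12}\wedge K)$ by bounded convergence. The remaining tail is bounded by $\int_0^1(\hat G^{-1}-K)^+dp$, which equals the subsample average of $(X_1-K)^+$. By the previous step this converges a.s. to $E[(X_1-K)^+\mid X_2>t_2]$, which is small for large $K$.

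The main obstacle is this interchange of limit and integral for the empirical conditional quantile with a data-dependent threshold. The truncation argument combined with consistency of subsample averages of $(X_1-K)^+$ is where I expect the care to go. The case of $I_{21}$ is symmetric.
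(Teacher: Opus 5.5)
Your proof is correct and has the same skeleton as the paper's (Glivenko--Cantelli for $\hat Q_2(u_2)$, a.s.\ convergence of the conditional empirical quantile, a limit--integral interchange for $\hat M^-_{12}$, the SLLN for $\hat\mu_{12}$, and continuous mapping using $M^+_{12}\neq 0$). You fill in details the paper glosses: the random-threshold sandwich, the VC/Glivenko--Cantelli step, and a truncation argument where the paper just cites dominated convergence. That citation is in fact legitimate, because for $X_1\ge 0$ and a continuity point $u_1'\in(u_1,1)$ the integrand on $[0,u_1]$ is eventually bounded by $\hat Q_{12}(u_1',u_2)\to Q_{12}(u_1',u_2)<\infty$, so your truncation is more than $\hat M^-_{12}$ needs. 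One slip, shared with the paper: absolute continuity of $F$ makes $F_2$ continuous but not strictly increasing near $t_2$, so $\hat Q_2(u_2)\to t_2$ requires a unique $u_2$-quantile. You can drop this assumption, since your sandwich bounds only use $F_2(\hat Q_2(u_2))\to u_2$, with the window $\{|F_2(X_{2i})-u_2|\le\delta\}$ in place of $\{|X_{2i}-t_2|<\delta\}$.
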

\begin{proof}
    By the Glivenko–Cantelli theorem, the empirical distribution function $\hat F_2$ converges uniformly almost surely to $F_2$. Since $F_2$ is continuous, the empirical quantile satisfies
$\hat Q_2(u_2) \xrightarrow{a.s.} Q_2(u_2).$
Consider the empirical conditional distribution function
\[
\hat F_{12}(x_1\mid u_2)
=
\frac{\sum_{i=1}^n 1\{X_{1i}\le x_1,\, X_{2i}>\hat Q_2(u_2)\}}
{\sum_{i=1}^n 1\{X_{2i}>\hat Q_2(u_2)\}}.
\]
The numerator is an average of i.i.d. indicators that converges almost surely to $P(X_1 \le x_1, X_2 > Q_2(u_2))$ by the law of large numbers and the convergence of $\hat Q_2(u_2)$. The denominator converges almost surely to $P(X_2 > Q_2(u_2)) > 0$. Hence, by the continuity,
$\hat F_{12}(\cdot\mid u_2) \xrightarrow{a.s.} F_{12}(\cdot\mid u_2).$
Uniform convergence of distribution functions implies convergence of the corresponding conditional quantile functions at every continuity point $p\in(0,1)$,
$\hat Q_{12}(p, u_2) \xrightarrow{a.s.} Q_{12}(p, u_2).$ Hence, by the dominated convergence theorem,
$\hat M^-_{12}(u_1,u_2)
=
\frac{1}{u_1}\int_0^{u_1}\hat Q_{12}(p, u_2)\,dp
\xrightarrow{a.s.}
M^-_{12}(u_1,u_2).$

Moreover, the empirical conditional mean
$\hat\mu_{12}(u_2)
=
\frac{1}{n_{u_2}}\sum_{i\in S_{u_2}}X_{1i}$
converges almost surely to $\mu_{12}(u_2)$ by the law of large numbers. Consequently,
$\hat M^+_{12}(u_1,u_2)
\xrightarrow{a.s.}
M^+_{12}(u_1,u_2).$

Finally, applying the continuous mapping theorem to
$\hat I_{12}(u_1,u_2)
=
1-\frac{\hat M^-_{12}(u_1,u_2)}{\hat M^+_{12}(u_1,u_2)},$ 
and noting that $M^+_{12}(u_1,u_2) \neq 0$ by assumption, we obtain
\[
\hat I_{12}(u_1,u_2)\xrightarrow{a.s.} I_{12}(u_1,u_2).
\]
The proof for $\hat I_{21}(u_1,u_2)$ follows by an identical argument, swapping the roles of $X_1$ and $X_2$.
\end{proof}

\subsection{Simulation Study}
A simulation study was carried out to investigate the  performance of the proposed nonparametric estimators 
$\hat{I}_{12}(u_1, u_2)$ and $\hat{I}_{21}(u_1, u_2)$. 
Random samples were generated from a bivariate lognormal distribution to reflect the positive and skewed nature typically associated with the inequality of data. The estimators were evaluated at selected quantile levels $(u_1, u_2) = (0.3, 0.3), (0.5, 0.5), (0.7, 0.7)$, representing lower, middle, and upper regions of the distribution. For each configuration, Monte Carlo simulations were performed with 500 replications for sample sizes $n = 50, 100, 200, 500$. The performance of the estimators was assessed using the absolute bias and mean squared error (MSE) and the simulation results are presented in Table \ref{simulation}.

\begin{table}[ht]
\centering
\renewcommand{\arraystretch}{1.3} 
\setlength{\tabcolsep}{10pt} 
\caption{Estimates, Absolute Bias and MSE of $(\hat I_{12}, \hat I_{21})$ based on simulated data}
\label{simulation}
\begin{tabular}{ccccc}
\toprule
$(u_1,u_2)$ & $n$ & $(\hat I_{12}, \hat I_{21})$ & Bias & MSE \\
\midrule

(0.3,0.3) & 50  & (0.8265, 0.8274) & (0.0174, 0.0129) & (0.0022, 0.0020) \\
          & 100 & (0.8388, 0.8369) & (0.0017, 0.0051) & (0.0009, 0.0009) \\
          & 200 & (0.8370, 0.8375) & (0.0040, 0.0054) & (0.0005, 0.0005) \\
          & 500 & (0.8401, 0.8390) & (0.0017, 0.0021) & (0.0002, 0.0002) \\

\addlinespace

(0.5,0.5) & 50  & (0.7879, 0.7912) & (0.0238, 0.0218) & (0.0039, 0.0036) \\
          & 100 & (0.7984, 0.8008) & (0.0145, 0.0114) & (0.0019, 0.0015) \\
          & 200 & (0.8051, 0.8058) & (0.0072, 0.0039) & (0.0008, 0.0009) \\
          & 500 & (0.8069, 0.8090) & (0.0029, 0.0034) & (0.0003, 0.0003) \\

\addlinespace

(0.7,0.7) & 50  & (0.7568, 0.7577) & (0.0436, 0.0424) & (0.0087, 0.0074) \\
          & 100 & (0.7801, 0.7739) & (0.0208, 0.0279) & (0.0041, 0.0038) \\
          & 200 & (0.7895, 0.7889) & (0.0157, 0.0121) & (0.0018, 0.0018) \\
          & 500 & (0.7971, 0.7961) & (0.0077, 0.0074) & (0.0007, 0.0007) \\

\bottomrule
\end{tabular}
\end{table}
It is observed that both the absolute bias and MSE decrease consistently as the sample size increases for all considered values of $(u_1,u_2)$. A similar pattern is evident for $\hat{I}_{21}(u_1,u_2)$ indicating that both estimators improve in accuracy with increasing sample size.

\section{Data analysis}
To illustrate the proposed nonparametric estimators of the bivariate Zenga curve, we consider cross-sectional data on digital infrastructure and digital capability for 23 countries in 2024. The data were obtained from the \href{https://www.itu.int}{International Telecommunication Union (ITU) database}, which provides internationally comparable indicators on information and communication technologies. Two indicators $X_1$ and $X_2$ were selected based on availability for all countries, $X_1$ represents the fixed broadband subscriptions per 100 inhabitants 
and $X_2$ gives the percentage of individuals possessing basic ICT skills.
The broadband variable reflects the physical infrastructure dimension of digital development, while the digital skills variable captures the human capital dimension. To facilitate the analysis, both variables were rescaled to the unit interval. The objective of the empirical analysis is to examine whether inequality in broadband access varies with the level of digital skills, and conversely, whether inequality in digital skills depends on broadband penetration.

The estimation of the bivariate Zenga surface is carried out using a nonparametric quantile-based approach. The empirical quantile function is constructed using order statistics. For each pair of quantile levels $(u_1, u_2)$, conditional subsamples are formed. Specifically, to estimate the component $I_{12}(u_1, u_2)$, we consider the subsample of broadband observations corresponding to countries whose digital skills exceed the $u_2$-th quantile. Similarly, to estimate $I_{21}(u_1, u_2)$, we consider the subsample of digital skill observations corresponding to countries whose broadband levels exceed the $u_1$-th quantile. For each conditional subsample, the lower and upper partial means are computed using the empirical quantile framework. The inequality measure is then obtained as one minus the ratio of the lower partial mean to the upper partial mean. To avoid instability due to the small sample size, the analysis is restricted to intermediate quantile levels, with $u_1$ and $u_2$ varying from $0.2$ to $0.8$. 
\begin{table}[ht]
\centering
\caption{Estimated values of bivariate Zenga measures at selected quantile combinations}
\begin{tabular}{|cccc|}

\hline
$u_1$ & $u_2$ & $\hat{I}_{12}(u_1,u_2)$ & $\hat{I}_{21}(u_1,u_2)$ \\
\hline
0.2 & 0.2 & 0.8282 & 0.7209 \\
0.5 & 0.2 & 0.7195 & 0.7318 \\
0.8 & 0.2 & 0.6220 & 0.6919 \\
0.2 & 0.5 & 0.6787 & 0.5660 \\
0.5 & 0.5 & 0.6596 & 0.6220 \\
0.8 & 0.5 & 0.5794 & 0.5421 \\
0.2 & 0.8 & 0.5533 & 0.4893 \\
0.5 & 0.8 & 0.6308 & 0.5064 \\
0.8 & 0.8 & 0.4080 & 0.3722 \\
\hline
\end{tabular}\label{zengavalues}
\end{table}
The estimated values of the bivariate Zenga measures for selected quantile combinations are presented in Table \ref{zengavalues}. It can be observed that the values of $\hat{I}_{12}(u_1, u_2)$ are highest at lower quantile levels, particularly at $(u_1, u_2) = (0.2, 0.2)$, where the inequality reaches 0.8282. This indicates substantial inequality in broadband access among countries with low levels of digital skills. As the quantile levels increase, the values of $\hat{I}_{12}$ generally decrease, reaching a minimum of 0.4080 at $(0.8, 0.8)$, suggesting a reduction in inequality among countries with higher digital capability. A similar trend is observed for $\hat{I}_{21}(u_1, u_2)$, although the magnitude is comparatively lower. The highest value of 0.7318 occurs at $(0.5, 0.2)$, indicating significant inequality in digital skills among countries with relatively low broadband penetration. As the quantile levels increase, the inequality measure decreases steadily, attaining a minimum value of 0.3722 at $(0.8, 0.8)$. The estimated bivariate Zenga surfaces are presented through contour plots in Figures \ref{contourI12} and \ref{contourI21}, which support the findings obtained from Table \ref{zengavalues}.
\begin{figure}
    \centering
    \includegraphics[width=0.5\linewidth]{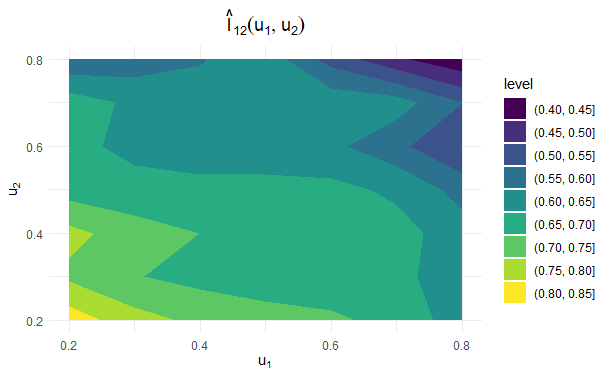}
    \caption{Contour plot of the estimated bivariate Zenga surface $\hat{I}_{12}(u_1, u_2)$, representing inequality in broadband access conditional on digital skills.}
    \label{contourI12}
\end{figure}
\begin{figure}
    \centering
    \includegraphics[width=0.5\linewidth]{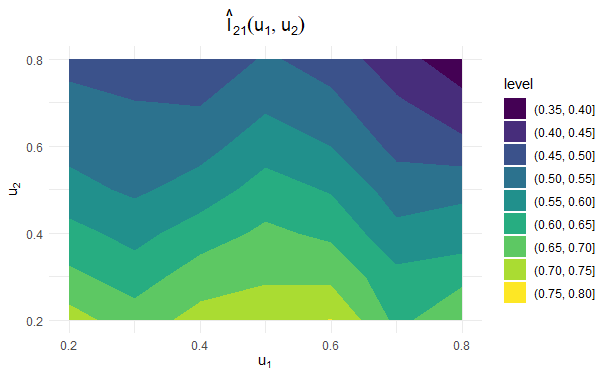}
    \caption{Contour plot of the estimated bivariate Zenga surface $\hat{I}_{21}(u_1, u_2)$, representing inequality in digital skills conditional on broadband access.}
    \label{contourI21}
\end{figure}
 The behavior of $\hat{I}_{12}(u_1, u_2)$ and $\hat{I}_{21}(u_1, u_2)$ is further examined through slice plots in Figures \ref{sliceI12} and \ref{sliceI21}. For fixed values of $u_2$, the measure $\hat{I}_{12}(u_1, u_2)$ generally decreases as $u_1$ increases, indicating a reduction in broadband inequality across higher quantiles. Similarly, for fixed values of $u_1$, the measure $\hat{I}_{21}(u_1, u_2)$ decreases as $u_2$ increases, suggesting that inequality in digital skills declines with increasing levels of broadband access.

\begin{figure}
    \centering
    \includegraphics[width=0.5\linewidth]{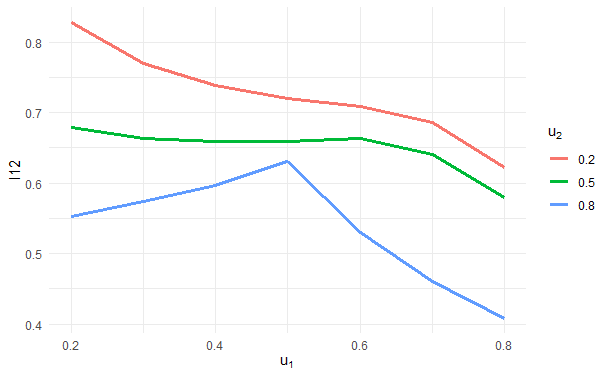}
    \caption{Slice plots of $\hat{I}_{12}(u_1, u_2)$ for selected values of $u_2$, illustrating the variation of broadband inequality across different quantile levels of digital skills.}
    \label{sliceI12}
\end{figure}
\begin{figure}
    \centering
    \includegraphics[width=0.5\linewidth]{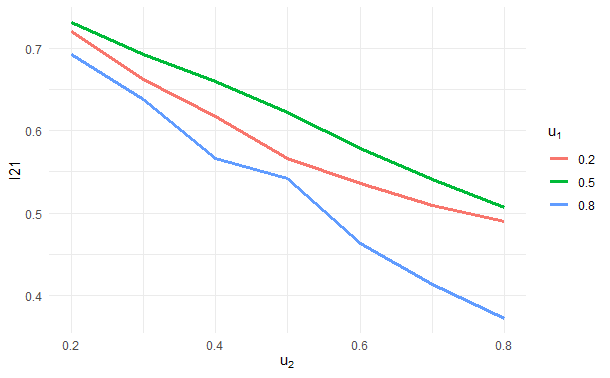}
    \caption{Slice plots of $\hat{I}_{21}(u_1, u_2)$ for selected values of $u_1$, showing the variation of digital skill inequality across different quantile levels of broadband access.}
    \label{sliceI21}
\end{figure}

These results indicate that inequality in broadband access is more pronounced among countries with lower levels of digital capability. As the level of digital skills increases, broadband inequality decreases, suggesting that improved digital capability is associated with a more equitable distribution of infrastructure. Likewise, inequality in digital skills is higher among countries with low broadband penetration and decreases steadily as broadband access improves.

An important observation from both the contour and slice plots is the asymmetric nature of the two measures. The surfaces $\hat{I}_{12}(u_1, u_2)$ and $\hat{I}_{21}(u_1, u_2)$ exhibit different patterns, indicating that inequality in one dimension conditional on another does not behave identically when the roles of the variables are reversed. This highlights the ability of the proposed bivariate Zenga measure to capture the directional dependence between the two variables.
\section*{Conflict of interest statement}
On behalf of all authors, the corresponding author states that there is no conflict of interest.
\section*{Data availability}
 The data used in this study are publicly available from the ITU database.
\bibliographystyle{apalike}
\bibliography{myref}

@article{lorenz1905methods,
  title={Methods of measuring the concentration of wealth},
  author={Lorenz, Max O},
  journal={Publications of the American Statistical Association},
  volume={9},
  number={70},
  pages={209--219},
  year={1905},
  publisher={Taylor \& Francis}
}

@article{gini1912variabilita,
  title={Variabilit{\`a} e mutabilit{\`a} (Variability and Mutability)},
  author={Gini, Corrado},
  journal={Tipografia di Paolo Cuppini, Bologna, Italy},
  volume={156},
  year={1912}
}

@inproceedings{zenga1990concentration,
  title={Concentration curves and concentration indexes derived from them},
  author={Zenga, Michele},
  booktitle={Income and Wealth Distribution, Inequality and Poverty: Proceedings of the Second International Conference on Income Distribution by Size: Generation, Distribution, Measurement and Applications, Held at the University of Pavia, Italy, September 28--30, 1989},
  pages={94--110},
  year={1990},
  organization={Springer}
}

@article{berti1995note,
  title={A note on {Z}enga concentration index},
  author={Berti, Patrizia and Rigo, Pietro},
  journal={Journal of the Italian Statistical Society},
  volume={4},
  number={3},
  pages={397--404},
  year={1995},
  publisher={Springer}
}

@article{greselin2010zenga,
  title={Zenga's New Index of Economic Inequality, Its Estimation, and an Analysis of Incomes in Italy},
  author={Greselin, Francesca and Pasquazzi, Leo and Zitikis, Ri{\v{c}}ardas},
  journal={Journal of Probability and Statistics},
  volume={2010},
  number={1},
  pages={718905},
  year={2010},
  publisher={Wiley Online Library}
}

@article{langel2012inference,
  title={Inference by linearization for {Z}enga’s new inequality index: a comparison with the {G}ini index},
  author={Langel, Matti and Till{\'e}, Yves},
  journal={Metrika},
  volume={75},
  number={8},
  pages={1093--1110},
  year={2012},
  publisher={Springer}
}

@article{polisicchio1993a,
  author  = {Polisicchio, M.},
  title   = {The Global {C}oncentration {M}easure and the {P}rinciple of {T}ransfers: {A}n {E}mpirical {A}nalysis ({I}talian)},
  journal = {Giornale degli Economisti e Annali di Economia},
  volume  = {52},
  pages   = {291--308},
  year    = {1993}
}

@article{polisicchio1993b,
  author  = {Polisicchio, M.},
  title   = {{S}ugli {O}rdinamenti {P}arziali {B}asati sulla {C}urva di {L}orenz e sulla {M}isura {P}untuale {Z}(p)},
  journal = {Quaderni di Statistica e Matematica Applicata alle Scienze Economico Sociali},
  volume  = {15},
  pages   = {63--86},
  year    = {1993}
}

@article{pollastri1987,
  author  = {Pollastri, A.},
  title   = {The {C}oncentration {C}urves {L}p and {Z}p in the {G}eneralized {L}ognormal {D}istribution},
  journal = {Giornale degli Economisti e Annali di Economia},
  volume  = {46},
  pages   = {639--663},
  year    = {1987}
}

@article{pr2024multivariate,
  title={Multivariate {L}eimkuhler Curve: Properties and Applications to Analysis of Bibliometric Data},
  author={Shifna,P R and Nair, N Unnikrishnan and Sunoj, S M},
  journal={Sankhya A},
  volume={86},
  number={2},
  pages={999--1024},
  year={2024},
  publisher={Springer}
}

@article{shifna2025extending,
  title={Extending the {L}orenz curve to higher dimensions: A multivariate quantile function approach},
  author={Shifna, P R and Sunoj, S M},
  journal = {Research Square},
  note    = {Preprint},
  year    = {2025}
}

@article{unnikrishnan2023properties,
  title={Properties of bivariate distributions represented through quantile functions},
  author={Nair, N Unnikrishnan and Vineshkumar, B},
  journal={American Journal of Mathematical and Management Sciences},
  volume={42},
  number={1},
  pages={1--12},
  year={2023},
  publisher={Taylor \& Francis}
}

@article{zenga1984proposta,
  title={Proposta per un indice di concentrazione basato sui rapporti fra quantili di popolazione e quantili di reddito},
  author={Zenga, Michele},
  journal={Giornale degli Economisti e Annali di Economia},
  pages={301--326},
  year={1984},
  publisher={JSTOR}
}

@article{zenga2007inequality,
  title={Inequality curve and inequality index based on the ratios between lower and upper arithmetic means},
  author={Zenga, Michele},
  journal={Statistica \& Applicazioni},
  volume={V},
  number={1},
  pages={3--27},
  year={2007},
  publisher={Vita e pensiero}
}

@article{jain1995multivariate,
  title={On multivariate weighted distributions},
  author={Jain, Kanchan and Nanda, Asok K},
  journal={Communications in Statistics-Theory and Methods},
  volume={24},
  number={10},
  pages={2517--2539},
  year={1995},
  publisher={Taylor \& Francis}
}

@article{nair2012some,
  title={Some properties of the new {Z}enga curve},
  author={Nair, N Unnikrishnan and Nair, K R Muraleedharan and Sreelakshmi, N},
  journal={Statistica \& Applicazioni},
  volume={X},
  number={1},
  pages={43--52},
  year={2012},
  publisher={Vita e pensiero}
}

@article{cruz2012digital,
  title={Digital divide across the {E}uropean Union},
  author={Cruz-Jesus, Frederico and Oliveira, Tiago and Bacao, Fernando},
  journal={Information \& Management},
  volume={49},
  number={6},
  pages={278--291},
  year={2012},
  publisher={Elsevier}
}

@article{billon2009disparities,
  title={Disparities in {ICT} adoption: A multidimensional approach to study the cross-country digital divide},
  author={Billon, Margarita and Marco, Rocio and Lera-Lopez, Fernando},
  journal={Telecommunications Policy},
  volume={33},
  number={10-11},
  pages={596--610},
  year={2009},
  publisher={Elsevier}
}

@article{chinn2007determinants,
  title={The determinants of the global digital divide: a cross-country analysis of computer and internet penetration},
  author={Chinn, Menzie D and Fairlie, Robert W},
  journal={Oxford Economic Papers},
  volume={59},
  number={1},
  pages={16--44},
  year={2007},
  publisher={Oxford University Press}
}

@article{thoene2025broadband,
  title={Broadband disparities and policy responses in {L}atin {A}merica and the {C}aribbean},
  author={Thoene, Ulf and Garc{\'\i}a Alonso, Roberto},
  journal={Digital Policy, Regulation and Governance},
  volume={27},
  number={5},
  pages={588--606},
  year={2025},
  publisher={Emerald Publishing Limited}
}

@article{vineshkumar2019bivariate,
  title={Bivariate quantile functions and their applications to reliability modelling},
  author={Vineshkumar, Balakrishnapillai and Nair, Narayanan Unnikrishnan},
  journal={Statistica},
  volume={79},
  number={1},
  pages={3--21},
  year={2019}
}
\end{document}